\numberwithin{equation}{section} 
\theoremstyle{plain}
  \newtheorem{prop}{Proposition}
  \newtheorem{cor}{Corollary}
\theoremstyle{definition}
  \newtheorem{remark}{Remark}
  \numberwithin{prop}{section}
   \numberwithin{cor}{section}
   \numberwithin{remark}{section}
\title{\Large\bfseries Dualities for characteristic polynomial averages of complex symmetric and self dual non-Hermitian random matrices}%
\author{Peter J. Forrester}
\date{}
\begin{document}

\maketitle

School of Mathematics and Statistics,  The University of Melbourne,
Victoria 3010, Australia. \: \: Email: {\tt pjforr@unimelb.edu.au}; \\

\bigskip

\begin{abstract}
\noindent
Ensembles of complex symmetric, and complex self dual random matrices are known to exhibit
local statistical properties distinct from those of the non-Hermitian Ginibre ensembles. On the other
hand, in distinction to the latter, the joint eigenvalue probability density function of these two
ensembles are not known. Nonetheless, as carried out in the recent works of Liu and Zhang,
Akemann et al.~and Kulkarni et al., by considering averages of products of characteristic polynomials, analytic progress
can be made. Here we show that an approach based on the theory of zonal polynomials provides
an alternative to the diffusion operator or supersymmetric Grassmann integrations methods of these works. It has
the advantage of not being restricted to a Gaussian unitary invariant measure on the matrix spaces. To illustrate this, as an extension,
we consider averages of  products and powers of characteristic polynomials for complex symmetric, and complex self dual random matrices
subject to a  spherical measure. In the case of powers, when comparing against the corresponding real Ginibre, respectively quaternion Ginibre
averages with a spherical measure, one finds the qualitative feature of a decreasing (increasing)
profile as the magnitude of the argument of the characteristic polynomial increases. This is analogous to
the findings of the second two of the cited works in the Gaussian case.

\end{abstract}

\vspace{3em}

\section{Introduction}
The three Ginibre ensembles of non-Hermitian $N \times N$ random matrices all consist of independent standard
Gaussians. They are distinguished by the number field corresponding to the entries, this being real (GinOE),
complex (GinUE) or quaternion (GinSE); see the recent text \cite{BF24} in relation to the naming and further
background material. In the quaternion case each element is represented as a particular $2 \times 2$ complex
matrix (see \cite[Eq.(1.1)]{BF24}), so as a complex matrix the size is $2N \times 2N$. An implied symmetry relating to
complex conjugation then requires
that the eigenvalues  occur in complex conjugate pairs. For a different reason -- simply that the characteristic polynomial
has all coefficients real -- this is similarly true for the GinOE, although now some of the eigenvalues will also be real
\cite{Ed97}.

Let $X$ be a Ginibre matrix from one of the three cases and form the Hermitian matrix $H = {1 \over 2} (X + X^\dagger)$. 
The corresponding ensembles are then the GOE (real symmetric matrices), GUE (complex Hermitian matrices)
and GSE (self dual Hermitian quaternion matrices) identified in pioneering work in the field due to Wigner and Dyson;
see the compilation of reprints of this early literature and associated commentary in \cite{Po65}. The joint eigenvalue
probability density function (PDF)  is then proportional to (see e.g.~\cite[Prop.~1.3.4]{Fo10})
  \begin{equation}\label{A.1}
  \prod_{l=1}^N e^{-\beta \lambda_l^2/2} \prod_{1 \le j < k \le N} |\lambda_k - \lambda_j|^\beta.
\end{equation}
Here the parameter $\beta$ --- known as the Dyson index --- takes on the values $\beta = 1$ (GOE),
$\beta = 2$ (GUE) and $\beta = 4$ (GSE). In the bulk scaled large $N$ limit (mean eigenvalue
spacing unity) the Dyson index determines the small distance behaviour of the PDF for the
spacing distance $s$ between consecutive eigenvalues, $P(s) \mathop{\sim}\limits_{s \to 0^+} c_\beta s^{\beta}$
for some (known \cite[Eq.~(8.165)]{Fo10}) constant $c_\beta$. The Dyson index also distinguishes various  other
statistical quantities. Consider for example the linear statistic $ \sum_{j=1}^N f(x_j)$ where $x_j = \lambda_j/\sqrt{2N}$.
The scaling determining $\{x_j\}$ has been chosen so that to leading order the spectrum is contained in the interval
$(-1,1)$; see e.g.~\cite[\S 1.4.2]{Fo10}. Assume $f(x)$ is smooth and introduce the expansion
$f(\cos \phi) = f_0 + \sum_{k=1}^\infty f_k \cos k \phi$, $f_k = {2 \over \pi} \int_0^\pi f(\cos \phi) \cos k \phi \, d \phi$ for
$k \ge 1$. Then one has that (see e.g.~\cite[Eq.~(3.2)]{Fo23})
  \begin{equation}\label{A.2}
 \lim_{N \to \infty} {\rm Var}\Big (   \sum_{j=1}^N f(x_j) \Big ) = {2 \over \beta} \sum_{k=1}^\infty k f_k^2.
 \end{equation}
 
 An immediate point of interest is to ask to what extent the Dyson parameter of the Gaussian Hermitian ensembles
 carries over to their non-Hermitian Ginibre counterparts? It remains true that $\beta$ can be used to distinguish the
 number field of the entries. Specifically in this sense $\beta$ specifies the number of real numbers required to specify
 a general entry, and so one associates $\beta = 1,2,4$ with the GinOE, GinUE and GinSE respectively. More 
 significant is that with this labelling, the analogue of (\ref{A.2}) is known to hold true. One introduces the linear
 statistic  $\sum_{j=1}^N f(\mathbf r_j)$ where $\mathbf r_j = \lambda_j/\sqrt{N}$, with the scaling being chosen so that to leading
 order the eigenvalue support is the unit disk in the complex plane. Define Var$(\sum_{j=1}^N f(\mathbf r_j)) = \langle
 |\sum_{j=1}^N f(\mathbf r_j)|^2 \rangle - |  \langle \sum_{j=1}^N f(\mathbf r_j) \rangle |^2 $ (here the use of the absolute values
 allows for the case that $f$ is complex).
 Using the just specified Dyson index labelling,
 and assuming that $f$ is smooth, the corresponding variance is given by 
  \begin{equation}\label{A.2}
 \lim_{N \to \infty} {\rm Var}\Big (   \sum_{j=1}^N f(\mathbf r_j) \Big ) = {1 \over 2 \pi \beta} \int_{|\mathbf r| < 1}
 | \Delta f|^2 \, d \mathbf r + {1 \over \beta} 
  \sum_{k=-\infty}^\infty |k| f_k f_{-k}, 
 \end{equation}  
 where
  \begin{equation}\label{A.2a}
   f(\mathbf r) \Big |_{\mathbf r = (\cos \theta, \sin \theta)} = \sum_{n=-\infty}^\infty f_n e^{in \theta};
 \end{equation}    
   see \cite{RV06,AHM11} in relation to GinUE, \cite{CES21a} in relation to GinOE, \cite{Ko15,OR16} for an analysis
 applying to both the GinUE and GinOE, and \cite{BF23} in relation to the GinSE.
In the real $(\beta = 1$) and quaternion $(\beta = 4)$ cases, when the eigenvalues come in complex conjugate pairs, it is required
that $f(x,y) = f(x,-y)$. Moreover the working of    \cite{BF23} for  GinSE it is required that $f(\mathbf r)$ be radially symmetric, in
which case the second term in (\ref{A.2}) is not present --- its stated form in presented as a conjecture.

However the dependence on $\beta$ no longer is analogous to that in Gaussian Hermitian ensembles if one asks
about local properties in the spectrum bulk (a local property is when the scaling is such that the mean spacing
between eigenvalues is unity, while the spectrum bulk is a portion of the spectrum away from the outer boundary of
the support of the spectrum, and away too from the real axis). Then explicit computation of the corresponding correlation
functions (see \cite{BF23}) are universal, with the same limiting determinantal point process appearing in each of
the three cases (for finite $N$ only the GinUE is determinantal, with both the GinOE and GinSE being Pfaffian point
processes).

In the pioneering work of Dyson \cite{Dy62c} attention was drawn to certain invariances associated with
the Gaussian ensembles. For example, the GOE is invariant with respect to conjugation with respect
to orthogonal matrices, which in fact is the reasoning for naming Gaussian orthogonal ensemble (GOE).
This in turn was shown to be associated with a time reversal symmetry (simply complex conjugation,
which is an involution for the GOE, or the operator $Q_{2N} K$, where $Q_{2N}$ is the $2N \times 2N$ anti-symmetric matrix
with all entries $-1$ ($+1$) in the leading upper (lower) triangular diagonal, and $K$ corresponds to
complex conjugation for  GSE). The theory of symmetric spaces identifies 10 distinguished matrix ensembles
from such a symmetry viewpoint \cite{KS99,AZ97}. Subsequent work on non-Hermitian (or non-unitary)
matrix ensembles from an analogous viewpoint \cite{BL02,Ma07,KSUS19} identifies no fewer that 38 distinct
cases. However, this number reduces down to 9 distinct cases if there is no more than a single symmetry involved
in the classification of \cite{KSUS19}; see \cite{HKKU20}. For these 9 distinct cases, large
scale numerical simulations of the bulk nearest neighbour spacing distributions were performed in \cite{HKKU20}, giving
compelling evidence that all but two of these 9 cases belonged to the GinUE universality class.
The exceptional ensembles from this viewpoint were found to be that of complex symmetric matrices,
and matrices of the form $M= Q_{2N} A$, where $A$ is a $2N \times 2N$ anti-symmetric matrix. This latter
class of non-Hermitian matrices have the property that $Q_{2N} M^T  Q_{2N}^T = M$, which is the condition
to be complex self (quaternion) dual, and moreover their spectrum is doubly degenerate; their introduction in
random matrix theory can be traced to Hastings \cite{Ha01} (see too the follow up work in
\cite[\S 4]{Fo16}). Sommers et al.~\cite{SFT99}, motivated
by problems in scattering theory, had a year or two earlier in 1999
initiated the study of   complex symmetric matrices
from a random matrix theory viewpoint. In the classification schemes referenced above, complex
symmetric matrices are labelled as AI${}^\dagger$, while complex self dual matrices are labelled
as AII${}^\dagger$.

While the eigenvalue statistics of GinUE can be computed analytically, in particular the bulk correlation
functions \cite{Gi65} and the nearest neighbour spacing distribution \cite{GHS88},
this is no longer the case
for non-Hermitian complex symmetric matrices, or non-Hermitian complex self dual matrices
(although one should be aware of the accurate determination of the spacing distribution via simulation in
\cite{HKKU20} and \cite{AMP22}).
 On the other hand, if
one asks instead about other statistical properties, in particular averaged products of characteristic polynomials,
two very recent works \cite{AAKP24,KKR24}, due to Akemann et al. and  Kulkarni, et al.~respectively,
show that progress  on the analytic front can be made. In fact these same results can be deduced by
appropriately specialising results contained in the further recent work of Liu and Zhang \cite{LZ24}.
Denote the set of $N \times N$ complex symmetric matrices by $\mathcal S_N(\mathbb C)$, and the set of
$2N \times 2N$ self dual matrices by $\mathcal Q \mathcal A_{2N}(\mathbb C)$ (here our choice of naming is following
the construction of the self dual matrices as the product of $Q_{2N}$ times a $2N \times 2N$ complex
anti-symmetric matrix).
 
The work of Akemann et al.~\cite{AAKP24} contains evaluation formulas for the average of two distinct characteristic
polynomials, where the complex conjugate is taken in one, for both $\mathcal S_N(\mathbb C)$ and 
$\mathcal Q \mathcal A_{2N}(\mathbb C)$ when chosen according to a Gaussian measure.

\begin{prop}\label{P1}
Denote by G$\mathcal S_N$ (G$\mathcal Q \mathcal A_{2N}$) the ensembles of random matrices from
the sets  $\mathcal S_N(\mathbb C)$ 
($\mathcal Q \mathcal A_{2N}(\mathbb C)$) chosen with measure proportional to $e^{-{\rm Tr} X X^\dagger}$
($e^{-{\rm Tr} X X^\dagger/2}$). Let
 \begin{equation}\label{A.3}
 E_N(x) = \sum_{j=0}^N {x^j \over j!}.
 \end{equation}  
We have
  \begin{equation}\label{A.3a}
 {1 \over C}  \langle \det (z \mathbb I_N - X) (\bar{w} \mathbb I_N - \bar{X}) \rangle_{{\rm G} \mathcal S_N} 
  = 
  \Big ( E_N(2  y) - {2 y \over N + 1}  E_{N-1}(2 y) \Big ) \Big |_{y = z \bar{w}}
 \end{equation}  
 and
 \begin{equation}\label{A.3b}
{1 \over C} \langle \det (z \mathbb I_{2N} - X) (\bar{w} \mathbb I_{2N} - \bar{X}) \rangle_{{\rm G}  \mathcal Q \mathcal A_{2N}}
=
 {N! \over (2N)!} \sum_{j=0}^N {(2j)! \over j!} (4 y)^{N-j} E_{2j}(2 y)) \Big |_{y = z \bar{w}},
 \end{equation}    
 where the normalisation $C$ is chosen so that the left hand sides are equal to unity for $  z = \bar{w} = 0$.
\end{prop}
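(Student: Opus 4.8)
The plan is to reduce both (\ref{A.3a}) and (\ref{A.3b}) to a single family of Gaussian moments and then to evaluate those moments with the theory of zonal polynomials. First I would expand each characteristic polynomial in the elementary symmetric functions $e_k$ of its eigenvalues,
\begin{equation*}
\det(z\mathbb I_N-X)=\sum_{k=0}^N(-1)^kz^{N-k}e_k(X),\qquad \det(\bar w\mathbb I_N-\bar X)=\sum_{l=0}^N(-1)^l\bar w^{N-l}\,\overline{e_l(X)},
\end{equation*}
and observe that the weight $e^{-\tr XX^\dagger}$ is invariant under the global rotation $X\mapsto e^{i\phi}X$, whereas $e_k(X)\mapsto e^{ik\phi}e_k(X)$. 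Hence only the balanced terms $k=l$ survive the average, and
\begin{equation*}
\langle\det(z\mathbb I_N-X)\det(\bar w\mathbb I_N-\bar X)\rangle_{{\rm G}\mathcal S_N}=\sum_{k=0}^N(z\bar w)^{N-k}m_k,\qquad m_k:=\langle|e_k(X)|^2\rangle_{{\rm G}\mathcal S_N}.
\end{equation*}
The same reduction works for ${\rm G}\mathcal Q\mathcal A_{2N}$ with $N$ replaced by $2N$, using that the $2N$ eigenvalues of a self dual matrix, taken with their automatic double degeneracy, are the roots of $\det(z\mathbb I_{2N}-X)$. Since the normalisation is $C=\langle|e_N(X)|^2\rangle=\langle|\det X|^2\rangle$ (respectively its $2N$-analogue), proving (\ref{A.3a}) amounts to showing $m_k/m_N=\tfrac{2^{N-k}}{(N-k)!}\cdot\tfrac{k+1}{N+1}$, and proving (\ref{A.3b}) to the corresponding identity read off by matching powers of $y=z\bar w$.

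To evaluate $m_k$ I would pass to the Autonne--Takagi factorisation $X=U\Sigma U^{T}$, with $U$ unitary and $\Sigma=\diag(\sigma_1,\dots,\sigma_N)\ge0$ the (nonnegative) singular values. Because the weight depends on $X$ only through $XX^\dagger=U\Sigma^2U^\dagger$, the measure factorises into a radial part in the $\sigma_i$ (the Gaussian factor $e^{-\sum_i\sigma_i^2}$ times the Takagi Jacobian, a Vandermonde-type expression in the $\sigma_i^2$) against the invariant measure on the angular variable, which sweeps out the symmetric space $U(N)/O(N)$. Doing the angular integral first, the appropriate Cauchy-type identity for the zonal polynomials $C_\kappa$ — the Jack polynomials with parameter $\alpha=2$, i.e.\ the zonal spherical functions of $U(N)/O(N)$, in the sense of James's theory of zonal polynomials of matrix argument — turns $m_k$ into a sum over partitions $\kappa$ of $k$ of $C_\kappa(\Sigma^2)$ with coefficients built from $C_\kappa(\mathbb I_N)$ and from the zonal-expansion coefficients of $e_k=s_{(1^k)}$; the remaining radial integral of $C_\kappa(\Sigma^2)$ against the Gaussian weight times the Takagi Jacobian is of Laguerre type and collapses to a ratio of generalised Pochhammer symbols $[a]^{(2)}_\kappa$. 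Equivalently one may keep $X$ itself and organise the Wick expansion of the Gaussian average through the $O(N)$ Weingarten calculus, whose structure constants are governed by the same zonal polynomials. Either way, only the last (radial) integral sees the choice of weight, which is exactly why the scheme carries over unchanged to the non-Gaussian (e.g.\ spherical) radial weights considered elsewhere in this paper. For ${\rm G}\mathcal Q\mathcal A_{2N}$ one runs the identical programme with the self dual analogue of the Takagi factorisation, whose angular variable lives in $U(2N)/{\rm Sp}(2N)$; now the relevant polynomials are the $\alpha=1/2$ (``symplectic'') zonal polynomials, and the Pochhammer symbols become $[a]^{(1/2)}_\kappa$.

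There remains the resummation of the partition sum. For (\ref{A.3a}) the single-column shape of $e_k=s_{(1^k)}$, combined with the known product formulas for $C_\kappa(\mathbb I_N)$ and $[a]^{(2)}_\kappa$, makes the sum over $\kappa\vdash k$ collapse to the rational factor $\tfrac{2^{N-k}}{(N-k)!}\tfrac{k+1}{N+1}$ — equivalently to the generating-function identity $\sum_{k=0}^N(z\bar w)^{N-k}m_k/m_N=E_N(2z\bar w)-\tfrac{2z\bar w}{N+1}E_{N-1}(2z\bar w)$; in particular this forces $\langle|\det X|^2\rangle_{{\rm G}\mathcal S_N}=(N+1)!/2^N$ (with the probability normalisation), a convenient arithmetic check. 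For (\ref{A.3b}) the analogous resummation with $\alpha=1/2$ Pochhammers does not telescope to a single exponential-type series, and instead produces the nested sum $\tfrac{N!}{(2N)!}\sum_{j=0}^N\tfrac{(2j)!}{j!}(4y)^{N-j}E_{2j}(2y)$ displayed in (\ref{A.3b}).

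I expect the main obstacle to sit in the middle paragraph and to have two parts. The first is bookkeeping: fixing the exact constants in the zonal (respectively symplectic zonal) Cauchy identity and the precise argument-shifts in the Pochhammer symbols that are generated by the Takagi-type Jacobian, so that no spurious prefactor survives — it is here that a wrong power of $2$ or of $N+1$ would creep in. The second is the self dual resummation: the $\alpha=1/2$ combinatorics is genuinely heavier than the $\alpha=2$ case, and the double-sum form of (\ref{A.3b}) has to be extracted by direct manipulation rather than being handed over by a single hypergeometric evaluation. A reassuring global check at the end is that both right-hand sides must equal $1$ at $z=\bar w=0$, which is automatic once $C$ — that is, $m_N$, respectively $m_{2N}$ — has been pinned down correctly.
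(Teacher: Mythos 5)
Your strategy is sound, and the checkable parts are right. A caveat first: the paper does not actually prove Proposition \ref{P1} --- it imports it from Akemann et al.\ \cite{AAKP24}, where it is obtained by supersymmetric Grassmann integration; within the paper's own framework the statement would instead be reached as the $k=1$ specialisation of Proposition \ref{P2} (equivalently of Corollary 3.1), i.e.\ by evaluating a one- or two-dimensional Laguerre-type average. Your proposal is a third, direct route, but it runs on exactly the toolkit the paper uses to prove Proposition \ref{P2}: the phase invariance $X\mapsto e^{i\phi}X$ reducing the average to the diagonal moments $m_k=\langle|e_k(X)|^2\rangle$, and then the $U(N)/O(N)$ (resp.\ $U(2N)/{\rm Sp}(2N)$) zonal integration formula --- the paper's (\ref{t.1}), or in $P$-normalisation (\ref{7.X1a}) --- combined with the Laguerre average (\ref{7.X1}). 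For the complex symmetric case your numbers check out: since $e_k=P_{(1^k)}^{(2)}$, formula (\ref{7.X1a}) with $A=B=\mathbb I_N$ gives $m_k=\binom{N}{k}\,h_{(k)}|_{\alpha=1/2}=\binom{N}{k}(k+1)!/2^k$, whence $m_N=(N+1)!/2^N$ and $m_k/m_N=\frac{2^{N-k}}{(N-k)!}\frac{k+1}{N+1}$ as you claim, and the resummation to $E_N(2y)-\frac{2y}{N+1}E_{N-1}(2y)$ is immediate. What your organisation buys, relative to deriving the duality first, is a completely explicit and self-contained derivation of (\ref{A.3a}); what it costs shows up in the second case.

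The one genuine soft spot is the self-dual resummation, which you assert rather than execute. Because $\det(z\mathbb I_{2N}-X)$ involves each independent eigenvalue twice, $e_k$ of all $2N$ eigenvalues is not a single $\alpha=1/2$ Jack polynomial but a sum over partitions $\kappa\subseteq(2)^N$ with $|\kappa|=k$ (the $p=2$ dual Cauchy identity at coincident arguments), so $m_k$ becomes a sum over two-column shapes $\kappa=(2^a1^b)$ weighted by $(P_{\kappa'}^{(2)}((1)^2))^2$ times the Pochhammer data; note also that killing the cross terms between distinct $\kappa,\mu$ of equal weight here genuinely requires the $\delta_{\kappa,\mu}$ of (\ref{t.1}), not just the phase rotation. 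This double sum over $(a,b)$ plus the final sum over $k$ must then be rearranged into the nested form of (\ref{A.3b}); that is doable but is the entire content of the second identity, and as written you have only named it. A cleaner way to finish within your own framework is to observe that (\ref{A.3b}) is the $k=1$ case of (\ref{6.0W}), so the partition sum collapses (via the dual Cauchy identity, as in the paper's (\ref{7.Y4})) to the two-fold integral $\langle(z\bar w+\lambda_1)^N(z\bar w+\lambda_2)^N\rangle$ over ${\rm ME}_{1,2}[\lambda^{-1/2}e^{-\lambda/2}]$ (cf.\ (\ref{7.U2})), from which the stated double sum follows by elementary term-by-term integration.
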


The result of Proposition \ref{P1} generalises the evaluation formula for the same average with respect to GinUE \cite{APS09}
 \begin{equation}\label{A.3c}
 {1 \over C}   \langle \det (z \mathbb I_N - X) (\bar{w} \mathbb I_N - \bar{X}) \rangle_{{\rm GinUE}_N} 
 =  E_N(  z \bar{w} ).
  \end{equation}   
As noted in \cite{APS09}, an  alternative way to write (\ref{A.3c}) is in the form of a so-called duality relation (see the recent review \cite{Fo24+}) 
 \begin{equation}\label{A.3c+}  
  \langle \det (z \mathbb I_N - X) (\bar{w} \mathbb I_N - \bar{X}) \rangle_{{\rm GinUE}_N}  =  
 \langle (z \bar{w} + |u|^2)^N \rangle_{u \in \mathcal{CN}(0,1)},
  \end{equation}
  where $\mathcal{CN}(0,1)$ denotes the standard complex normal distribution.
  A  duality formula for the average of several products of GinUE random matrices had  been
  found earlier by Nishigaki and Kamenev \cite{NK02}, although all factors were absolute values squared.
  This constraint has been removed in the recent thesis of Serebryakov \cite[second statement of Th.~4.1]{Se23},
  to obtain the generalisation of (\ref{A.3c+})
\begin{equation}\label{6.V}
\Big  \langle \prod_{l=1}^k \det(z_l \mathbb I_N - Z)   \det({w}_l \mathbb I_N - \bar{Z}) \Big \rangle_{Z \in {\rm GinUE}_{N}} =
\bigg \langle \det \begin{bmatrix} D_1 & -Y \\ Y^\dagger & D_2 \end{bmatrix}^N \bigg \rangle_{Y \in {\rm GinUE}_{k}},
 \end{equation}
 where $D_1, D_2$ are diagonal matrices with diagonal entries $\mathbf z=(z_1,\dots,z_k)$ and $\mathbf w=(w_1,\dots,w_k)$ respectively.
 The very recent work of Kulkarni et al.~\cite{KKR24} generalises (\ref{6.V}) for averages over 
 G$\mathcal S_N$ and G$\mathcal Q \mathcal A_{2N}$.

\begin{prop}\label{P2}
Let $D_1,D_2$ be as above, We have
\begin{equation}\label{6.0v+}
\Big  \langle \prod_{l=1}^k \det(z_l \mathbb I_N - Z)   \det(\bar{w}_l \mathbb I_N - \bar{Z}) \Big \rangle_{Z \in {\rm G} \mathcal S_N}  =
\bigg \langle \det \begin{bmatrix} D_1 \otimes \mathbb I_2  & -Y \\ Y^\dagger & \bar{D}_2 \otimes \mathbb I_2 \end{bmatrix}^{N/2} \bigg \rangle_{Y \in {\rm GinSE}_k},
 \end{equation}
 Similarly
 \begin{equation}\label{6.0w+}
\Big  \langle \prod_{l=1}^k \det(z_l \mathbb I_{2N} - Z)^{1/2}   \det(\bar{w}_l \mathbb I_{2N} - \bar{Z})^{1/2} \Big \rangle_{Z \in {\rm G}  \mathcal Q \mathcal A_{2N}}  =
\bigg \langle \det \begin{bmatrix} D_1 & -Y \\ Y^\dagger & \bar{D}_2 \end{bmatrix}^{N} \bigg \rangle_{Y \in {\rm GinOE}_k}.
 \end{equation}
 
\end{prop}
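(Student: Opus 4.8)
The plan is to treat \eqref{6.0v+} and \eqref{6.0w+} on an equal footing, as the $\alpha=2$ and $\alpha=1/2$ instances of one construction whose $\alpha=1$ instance is \eqref{6.V}, and to prove each by reducing both of its sides to the same series over partitions $\lambda$ with $\ell(\lambda)\le k$. First I would expand the left-hand side spectrally: regarding $\prod_{l=1}^k\det(z_l\mathbb I_N-Z)$ as a symmetric polynomial of degree at most $N$ in each of $z_1,\dots,z_k$, a Jack-polynomial dual Cauchy identity — which interchanges the parameters $\alpha$ and $1/\alpha$ — rewrites it, up to normalisation constants, as $\sum_{\lambda\subseteq k\times N}(\pm)\,J_\lambda^{(1/\alpha)}(z_1,\dots,z_k)\,J_{\lambda^\vee}^{(\alpha)}(\zeta_1,\dots,\zeta_N)$, where the $\zeta_j$ are the eigenvalues of $Z$, $\lambda^\vee$ is the complement of $\lambda$ in the $k\times N$ rectangle, and $\alpha$ is the parameter natural to the matrix space: $\alpha=2$ for $\mathrm G\mathcal S_N$, and $\alpha=1/2$ for $\mathrm G\mathcal Q\mathcal A_{2N}$, where the $\zeta_j$ are the $N$ distinct members of the doubly-degenerate spectrum — which is why \eqref{6.0w+} carries the exponent $\tfrac12$: the full characteristic polynomial is the square of this reduced one. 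Doing the same with $\prod_l\det(\bar w_l\mathbb I_N-\bar Z)$ turns the left side into a double sum over $\lambda,\mu$ weighted by $J_\lambda^{(1/\alpha)}(\mathbf z)\,J_\mu^{(1/\alpha)}(\bar{\mathbf w})\,\langle J_{\lambda^\vee}^{(\alpha)}(Z)\,\overline{J_{\mu^\vee}^{(\alpha)}(Z)}\rangle$.

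The engine of the method is the evaluation of this ensemble average, which plays the role for these spaces that the orthogonality of Schur functions plays for GinUE in \eqref{6.V}. I would show that $\langle J_\kappa^{(\alpha)}(Z)\,\overline{J_\rho^{(\alpha)}(Z)}\rangle$ over $\mathrm G\mathcal S_N$ or $\mathrm G\mathcal Q\mathcal A_{2N}$ vanishes unless $\kappa=\rho$, and otherwise equals, up to an $N$-independent constant, the product $c_\kappa(N)=\prod_{(i,j)\in\kappa}\big(N-(i-1)+\alpha(j-1)\big)$ — which vanishes exactly when $\ell(\kappa)>N$, and so forces the finite truncation already visible in Proposition~\ref{P1}. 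This is where zonal-polynomial theory supplies what the diffusion-operator and supersymmetric-Grassmann methods supply differently; it should follow either from classical zonal-polynomial matrix integrals (of the kind available for real and quaternion Wishart matrices) or directly from Wick's theorem organised by the hyperoctahedral double cosets intrinsic to zonal polynomials. With it in hand, orthogonality collapses the double sum to a single sum over $\lambda$ with $\ell(\lambda)\le k$, of the shape $\sum_\lambda c_{\lambda^\vee}(N)\big(j_\lambda^{(1/\alpha)}\big)^{-2}J_\lambda^{(1/\alpha)}(\mathbf z)\,J_\lambda^{(1/\alpha)}(\bar{\mathbf w})$ times an overall monomial in the $z_l\bar w_l$.

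Finally I would run the same machinery backwards on the right-hand side. A Schur-complement reduction $\det\big[\begin{smallmatrix}A&-Y\\ Y^\dagger&B\end{smallmatrix}\big]=\det A\,\det(B+Y^\dagger A^{-1}Y)$ (with $A=D_1\otimes\mathbb I_2$, $B=\bar D_2\otimes\mathbb I_2$ in the first identity, the bare diagonal matrices in the second) extracts the monomial in the $z_l\bar w_l$ and leaves $\det(\mathbb I+\Omega)^{N/2}$, respectively $\det(\mathbb I+\Omega)^{N}$; a Jack binomial expansion of this in zonal polynomials $C_\kappa^{(\gamma)}(\Omega)$ — whose coefficient vanishes once $\kappa_1$ exceeds $N$ — followed by the Ginibre average of $C_\kappa^{(\gamma)}$ via the orbit integral $\int C_\kappa^{(\gamma)}(UBU^\dagger A)\,dU=C_\kappa^{(\gamma)}(A)\,C_\kappa^{(\gamma)}(B)/C_\kappa^{(\gamma)}(\mathbb I)$ for the group $Sp(k)$ (so $\gamma=1/2$) in the $\mathrm{GinSE}_k$ case and $O(k)$ (so $\gamma=2$) in the $\mathrm{GinOE}_k$ case, together with a radial integral over the singular values of $Y$, again produces a single sum over partitions. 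Since $C_\kappa^{(\gamma)}$ of the $k\times k$ (respectively $2k\times2k$) matrix $Y$ vanishes for $\ell(\kappa)>k$, the index set is again $\lambda\subseteq k\times N$; the parameter $\gamma$ forced by the Ginibre symmetry group is precisely the $1/\alpha$ produced by the dual Cauchy identity on the left, and the complementation relation $J_\kappa^{(\gamma)}(z_1^{-1},\dots,z_k^{-1})\propto J_{\kappa^\vee}^{(\gamma)}(z_1,\dots,z_k)$ aligns the partition labels. What then remains is the bookkeeping identity that the cell-product $c_{\lambda^\vee}(N)$ from the Gaussian average over the symmetric or self-dual space equals the cell-product furnished by the Jack binomial coefficient on the Ginibre side, once the Jack norms $j_\lambda^{(\alpha)}$, the powers of $2$ and $4$, and the prefactor $N!/(2N)!$ are accounted for; the case $k=1$ must return Proposition~\ref{P1}.

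The step I expect to be the genuine obstacle is the zonal averaging formula of the second paragraph: pinning down the exact $N$-dependence of $c_\kappa(N)$ and its vanishing for partitions too long for the matrix size is the non-routine content, and it is also where the advertised flexibility — that for a general unitarily invariant weight only this $N$-dependent factor, and not the partition structure of the duality, changes — would have to be argued. Reconciling all the normalisation constants so that the two single sums agree on the nose is the second, more mechanical, place where slips are easy.
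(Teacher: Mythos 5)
Your proposal is correct and follows essentially the same route as the paper: the dual Cauchy identity with parameters $\alpha\leftrightarrow 1/\alpha$ ($\alpha=2$ for ${\rm G}\mathcal S_N$, $\alpha=1/2$ for ${\rm G}\mathcal Q\mathcal A_{2N}$), the zonal integration formulas over the complex symmetric and self dual matrix spaces (the paper's Eq.~(\ref{t.1}), cited from \cite{FS09}) combined with the Laguerre-type Jack polynomial average to produce the diagonal, cell-product $N$-dependence you call $c_\kappa(N)$, then the bi-invariant group integral (\ref{16.jlX}) with $\gamma=1/\alpha$ for the ${\rm GinSE}_k$ or ${\rm GinOE}_k$ side, and the block determinant identity (\ref{7.X5}). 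The only difference is organisational --- you reduce both sides to a common partition sum, whereas the paper transforms the left side into the right --- and your identified ``genuine obstacle'' is exactly the input the paper imports from \cite{FS09}.
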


\begin{remark}\label{R1} 
The matrix on the RHS of (\ref{6.0v+}) has doubly degenerate eigenvalues. The square root operation therein is defined by
including each eigenvalue only once in the implied product. Similarly for the matrix on the LHS of (\ref{6.0w+}). In fact in \cite{KKR24} 
this latter possibility is not considered, but rather the identity is stated as
 \begin{equation}\label{6.0W}
\Big  \langle \prod_{l=1}^k \det(z_l \mathbb I_{2N} - Z)   \det(\bar{w}_l \mathbb I_{2N} - \bar{Z}) \Big \rangle_{Z \in {\rm G}  \mathcal Q \mathcal A_{2N}}  =
\bigg \langle \det \begin{bmatrix} D_1   \otimes \mathbb I_2 & -Y \\ Y^\dagger & \bar{D}_2  \otimes \mathbb I_2  \end{bmatrix}^{N} \bigg \rangle_{Y \in {\rm GinOE}_{2k}}.
 \end{equation}
 This follows from (\ref{6.0w+}) with  $k \mapsto 2k$, $\mathbf z \mapsto (z_1,z_1,\dots,z_k,z_k)$ and $\mathbf w \mapsto (w_1,w_1,\dots,w_k,w_k)$.
 \end{remark} 

The derivations of the identities of Propositions \ref{P1} and \ref{P2} in \cite{AAKP24} and \cite{KKR24}
use supersymmetric Grassmann integrations methods. An alternative 
derivation of (\ref{6.V}), which can be applied too to the identities of  Proposition \ref{P2},
 uses matrix diffusion equations \cite{Gr16,LZ24}.
The derivation of \cite{Se23} uses matrix integrals from the theory of zonal polynomials \cite{Ta84,FR09}. Relative to the other two
methods, this has the advantage of not being restricted to the Gaussian class of orthogonally invariant measures
 \cite{Se23,SSD23,SS24,Fo24+}. Here we show that it
is similarly true that the dualities of Proposition  \ref{P2} can be derived in the context of integration formulas from
the theory of zonal polynomials, and that this allows for generalisations beyond the Gaussian case. 

We begin in Section \ref{S2.1} by introducing some essential features of zonal polynomial theory, in particular
some key integration formulas. These are used in Section \ref{S2.2} to derive the duality identities of Proposition
\ref{P2} using zonal polynomials. The context and some extensions of the duality identities are discussed in Section \ref{S3}.
First, we note a specialisation to powers of the characteristic polynomial. We compare both the dualities of
Proposition \ref{P2}, and those  obtained by  specialisating to powers of the characteristic polynomial, to known
dualities for GinOE and GinSE. This comparison is carried out at a quantitative level in Proposition \ref{P3.1},
revealing a decreasing (increasing)
profile as the magnitude of the argument of the characteristic polynomial increases towards the boundary of support.
The remainder
of Section \ref{S3} considers the case of complex symmetric and complex self dual matrices chosen with a so-called
 spherical measure, where in particular an analogous comparison result is derived.

\section{Zonal polynomial derivation of Proposition \ref{P2}}
\subsection{Some assumed theory}\label{S2.1}
Zonal polynomials are particular cases of Jack polynomials $P_\kappa^{(\alpha)}(\mathbf x)$
 \cite[Ch.~VI.6]{Ma95}, \cite[Ch.~12]{Fo10}, \cite[Ch.~7]{KK09}.
 Jack polynomials are symmetric polynomials of $N$ variables
 $\mathbf x = (x_1,\dots,x_N)$, labelled by a partition $\kappa :=
 (\kappa_1,\dots,\kappa_N)$ with parts which are non-negative
 integers ordered as $\kappa_1 \ge \kappa_2 \ge \cdots \ge \kappa_N$,
 and dependent on a parameter $\alpha > 0$. They can be specified as
 the polynomial eigenfunctions of the
differential operator
\begin{equation}\label{4.0}
\sum_{j=1}^N
x_j^2 {\partial^2 \over \partial x_j^2}  
+ {2 \over \alpha}  \sum_{1 \le j < k \le N}{1 \over x_j -
x_k}
\Big (x_j^2 {\partial \over \partial x_j} - x_k^2 {\partial
\over
\partial x_k}
\Big ),
\end{equation}
with the triangular structure with respect to the monomial basis  $\{ m_\mu(\mathbf x) \}$ 
\begin{equation}\label{4.0a}
P_\kappa^{(\alpha)}(\mathbf x) = m_\kappa(\mathbf x) +
\sum_{\mu < \kappa} c_{\kappa, \mu}^{(\alpha)} 
m_\mu (\mathbf x).
\end{equation}
In (\ref{4.0a})  the 
notation $\mu < \kappa$ denotes the partial order on partitions, defined by the requirement that 
 for each $s=1,\dots,\ell(\kappa)$ (where $\ell(\kappa)$  denotes the number of nonzero parts of $\kappa$)
 we have
$\sum_{i=1}^s \mu_i \le \sum_{i=1}^s \kappa_i$. The Jack polynomials furthermore have the homogeniety property
\begin{equation}\label{H}
c^{|\kappa|} P_\kappa^{(\alpha)}(\mathbf x) = P_\kappa^{(\alpha)}(c\mathbf x).
\end{equation}

The zonal polynomials are the special cases $\alpha = 1$ 
(this corresponding to the Schur polynomials $s_\kappa(\mathbf x)$),
$\alpha = 2$ (this, up to normalisation, corresponding the zonal polynomials
of mathematical statistics \cite{Mu82}), and also the case $\alpha = 1/2$.
 For general $\alpha > 0$ the Jack
polynomials satisfy the so-called
dual Cauchy identity (see e.g.~\cite[Eq.~(12.187)]{Fo10})
\begin{equation}\label{SM2}
\prod_{k=1}^N \prod_{l=1}^p (1 - x_k y_l) = \sum_{\kappa \subseteq (p)^N} (-1)^{|\kappa|} P_\kappa^{(\alpha)}(\mathbf x)  P_{\kappa'}^{(1/\alpha)}(\mathbf y), 
\end{equation}
where $(N)^p$ denotes the partition with $p$ parts all equal to $N$ and $|\kappa| = \sum_{l=1}^N \kappa_l$.
The notation $\kappa'$ denotes the conjugate partition, when the rows and columns of the corresponding
diagram are interchanged. 

Often used, for example in the mathematical statistics literature in the
case $\alpha = 2$, are the renormalised Jack polynomials
 \begin{equation}\label{15.ckp}
C_\kappa^{(\alpha)}(\mathbf x) := {\alpha^{|\kappa|} |\kappa|! \over
h_\kappa'} P_\kappa^{(\alpha)}(\mathbf x).
\end{equation}
Here $h_\kappa'$, and a companion quantity
$h_\kappa$, depends on $\alpha$ although this is repressed from the
notation. They are
defined in terms of the diagram of $\kappa$ according to
 \begin{equation}\label{2.2h}
h_\kappa' = \prod_{s \in \kappa} ( \alpha (a(s) + 1) + l(s)), \quad
h_\kappa = \prod_{s \in \kappa} (\alpha a(s) + l(s)+1).
\end{equation}
The quantities $a(s), l(s)$ are the arm and leg lengths at position $s$ in the diagram associated with $\kappa$ \cite{Ma95}.
One notes the inter-relation with respect to  the conjugate partition
 \begin{equation}\label{2.2h+}
 h_{\kappa'}'  = \alpha^{|\kappa|}  h_\kappa  |_{\alpha \mapsto 1/\alpha}.
\end{equation} 
We denote by ME${}_{\beta,N}[w]$ the probability density function (PDF) proportional to
$$
\prod_{l=1}^N w(x_l) \prod_{1 \le j < k \le N} |x_k - x_j|^\beta.
$$
In terms of $C_\kappa^{(\alpha)}$, for this class of PDF
with $w(\lambda) = \lambda^a e^{-\lambda} \mathbbm 1_{x > 0}$ ($a > -1$),
the general Jack polynomials also satisfy the integration formula
(see e.g.~\cite[Eq.~(12.152)]{Fo10})
 \begin{equation}\label{7.X1}
 \langle C_\kappa^{(\alpha)}(\mathbf x) \rangle_{{\rm ME}_{2/\alpha,N}[\lambda^a e^{-\lambda}]} =
C_\kappa^{(\alpha)}((1)^N) [a+1+ (N-1)/\alpha]_\kappa^{(\alpha)}.
 \end{equation} 
 The notation $(1)^N$ specifies  that $x_i = 1$ ($i=1,\dots,N$) in the argument, while
 \begin{equation}\label{5.1}
[u]_\kappa^{(\alpha)} = \prod_{j=1}^N {\Gamma(u - (j-1)/\alpha + \kappa_j) \over
\Gamma(u - (j-1)/\alpha) },
\end{equation}
is a generalisation of the classical Pochhammer symbol. Involving (\ref{5.1}) and the
quantity $ h_\kappa$ from (\ref{2.2h}) is the formula \cite[Prop.~12.6.2]{Fo10}
 \begin{equation}\label{5.1f}
 P_\kappa^{(\alpha)}((1)^N) = {\alpha^{|\kappa|} [N/\alpha]_\kappa^{(\alpha)} \over h_\kappa}.
\end{equation}

Particular to the zonal polynomial cases is an integration formula involving
non-Hermitian matrices \cite{Ta84},  \cite{FR09}.

\begin{prop} 
For $\alpha = 2 ,1$ and $1/2$, let $X,A,B$ be $N \times N$ matrices with real, complex and
quaternion entries respectively. In the case of quaternion entries require that the eigenvalues
of both $A$ and $B$ be doubly degenerate.
Let $X$ be a random matrix chosen from a left and right
unitary invariant distribution, with the unitary matrices drawn from the orthogonal group,
unitary group and symplectic unitary group respectively. We have
\begin{equation}\label{16.jlX}
\langle
C_\kappa^{(\alpha)}(A X^\dagger B X) \rangle_X =
{C_\kappa^{(\alpha)}(A)C_\kappa^{(\alpha)}(B) \over (C_\kappa^{(\alpha)}((1)^{N}))^2} 
\langle
C_\kappa^{(\alpha)}(X X^\dagger) \rangle_X,
\end{equation}
where $C_\kappa^{(\alpha)}(A X^\dagger B X)$ is
defined as $C_\kappa^{(\alpha)}(\mathbf y)$ with $\mathbf y = (y_1,\dots,y_{N})$ denoting the
eigenvalues of $A  X^\dagger B  X$, and similarly the meaning of
$C_\kappa^{(\alpha)}(A)$, $C_\kappa^{(\alpha)}(B)$. In the quaternion case all these matrices
are required to be doubly degenerate, and only one copy of the eigenvalues is to be included.
\end{prop}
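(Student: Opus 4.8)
The plan is to exploit the left–right invariance of the distribution of $X$ to turn the matrix average into two successive integrations over the relevant compact group, each of which is handled by the classical reproducing property of renormalised zonal polynomials. Write $G$ for the orthogonal group $O(N)$, the unitary group $U(N)$, or the unitary symplectic group $Sp(N)$, according to whether $\alpha = 2$, $1$ or $1/2$. The key lemma I would invoke is the zonal polynomial group integration formula
\begin{equation}\label{repr}
\langle C_\kappa^{(\alpha)}(P U Q U^\dagger) \rangle_{U \in G} = {C_\kappa^{(\alpha)}(P) \, C_\kappa^{(\alpha)}(Q) \over C_\kappa^{(\alpha)}((1)^N)},
\end{equation}
valid for $\alpha = 2$ (James's orthogonal integral), for $\alpha = 1$ (the Schur-function case, equivalent to Schur orthogonality on $U(N)$), and for $\alpha = 1/2$ (quaternion zonal polynomials, where $P$ and $Q$ are taken with doubly degenerate spectra and $C_\kappa^{(1/2)}$ is read off from a single copy of each eigenvalue); these are precisely the identities underlying the references \cite{Ta84,FR09}.

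Granting (\ref{repr}), the argument is short. Since the law of $X$ is invariant under $X \mapsto U X V$ for every $U, V \in G$, it is also invariant when $U$ and $V$ are replaced by independent Haar-distributed elements of $G$, so that
\begin{equation}
\langle C_\kappa^{(\alpha)}(A X^\dagger B X) \rangle_X = \langle C_\kappa^{(\alpha)}(A V^\dagger X^\dagger U^\dagger B U X V) \rangle_{X,U,V}.
\end{equation}
Because $C_\kappa^{(\alpha)}$ depends only on the multiset of eigenvalues of its argument, it is unchanged under the cyclic rearrangement $M_1 M_2 \mapsto M_2 M_1$ of products of square matrices of equal size; iterating this,
\begin{equation}
C_\kappa^{(\alpha)}(A V^\dagger X^\dagger U^\dagger B U X V) = C_\kappa^{(\alpha)}\big( (X V A V^\dagger X^\dagger)(U^\dagger B U) \big).
\end{equation}
Now perform the $U$-average using (\ref{repr}) with $P = X V A V^\dagger X^\dagger$ (independent of $U$) and $Q = B$: this extracts the factor $C_\kappa^{(\alpha)}(B)/C_\kappa^{(\alpha)}((1)^N)$ and leaves $\langle C_\kappa^{(\alpha)}(X V A V^\dagger X^\dagger) \rangle_{X,V}$. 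A further cyclic rearrangement writes $X V A V^\dagger X^\dagger = X (V A V^\dagger) X^\dagger$, whose eigenvalues coincide with those of $(X^\dagger X)(V A V^\dagger)$; performing the $V$-average with (\ref{repr}) (now $P = X^\dagger X$, $Q = A$) extracts $C_\kappa^{(\alpha)}(A)/C_\kappa^{(\alpha)}((1)^N)$ and leaves $\langle C_\kappa^{(\alpha)}(X^\dagger X) \rangle_X$, which equals $\langle C_\kappa^{(\alpha)}(X X^\dagger) \rangle_X$ since $X$ is square. Multiplying the three factors reproduces (\ref{16.jlX}).

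The main obstacle is (\ref{repr}) itself, together with the attendant bookkeeping in the quaternion case. For $\alpha = 2$ and $\alpha = 1$ this formula is standard harmonic analysis on $O(N)$ and $U(N)$; for $\alpha = 1/2$ one must invoke the corresponding identity for quaternion zonal polynomials and, just as importantly, verify that every matrix on which $C_\kappa^{(1/2)}$ is evaluated in the chain above — namely $A X^\dagger B X$, $X V A V^\dagger X^\dagger$, $X^\dagger X$ and $X X^\dagger$ — inherits the doubly degenerate spectral structure that makes $C_\kappa^{(1/2)}$ well defined; this is exactly why double degeneracy of $A$ and $B$ is imposed in the hypotheses (degeneracy of $X^\dagger X$ and $X X^\dagger$ being automatic, since they are self dual). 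A minor point worth recording is that squareness of $X$, $A$, $B$ is what legitimises the cyclic rearrangements used throughout.
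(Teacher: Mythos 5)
Your proposal is correct and follows essentially the same route as the paper's own derivation: both exploit the bi--unitary invariance of the law of $X$ to insert Haar averages over the relevant compact group, apply the group integral $\langle C_\kappa^{(\alpha)}(P U Q U^\dagger)\rangle_U = C_\kappa^{(\alpha)}(P)C_\kappa^{(\alpha)}(Q)/C_\kappa^{(\alpha)}((1)^{N})$ twice (once for $A$, once for $B$), and use the cyclic invariance of eigenvalues to put the arguments in the required form; the only cosmetic difference is that you insert both Haar elements at once and peel off $B$ before $A$, whereas the paper uses left invariance first and then right invariance. Your closing remarks on the double degeneracy bookkeeping in the quaternion case are apt and consistent with the paper's conventions.
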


We require too matrix integration formulas over the set of complex symmetric matrices, and
complex self dual matrices \cite{FS09}.

\begin{prop} 
Let $A,B$ be $N \times N$ complex symmetric ($2N \times 2N$ complex self dual) matrices. Let $X \in \mathcal S_N(\mathbb C)$ ($\mathcal Q \mathcal  A_{2N}(\mathbb C)$) 
be chosen from
a measure unchanged by $X \mapsto U X U^T$ for $U \in U(N)$ ($X \mapsto U X U^D$ for $U \in U(2N)$ where $U^D = Q_{2N} U^T Q_{2N}^T$.
 One has 
\begin{align}\label{t.1}
\Big \langle C_\kappa^{(2)}(A X)  C_\mu^{(2)}(B X^\dagger )  \Big \rangle_{X \in \mathcal S_N(\mathbb C)}
& = \delta_{\kappa,\mu} {C_{\kappa}^{(2)}(AB) \over s_{2\kappa}((1)^N)} \langle C_\kappa^{(2)}(X X^\dagger) \rangle_{X\in \mathcal S_N(\mathbb C)}, \nonumber \\
\Big \langle C_\kappa^{(1/2)}(A X)  C_\mu^{(1/2)}(B X^\dagger )  \Big \rangle_{X \in \mathcal Q \mathcal A_{2N}(\mathbb C)}
& =\delta_{\kappa,\mu}{C_{\kappa}^{(1/2)}(AB) \over s_{\kappa^2}((1)^{2N})} \langle C_\kappa^{(1/2)}(X X^\dagger) \rangle_{X\in \mathcal Q \mathcal A_{2N}(\mathbb C)}. 
\end{align}
Here the notation $2 \kappa$ denotes the partition with each part each $2 \kappa_i$, while the notation $\kappa^2$ denotes the partition in
which each part of $\kappa$ is repeated.
\end{prop}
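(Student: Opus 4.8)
\emph{A proposed proof.} The identities \eqref{t.1} express the reproducing‑kernel (``addition theorem'') property of the zonal spherical functions of the compact symmetric spaces $U(N)/O(N)$ and $U(2N)/\mathrm{Sp}(N)$, whose spherical functions are, up to normalisation, $C_\kappa^{(2)}$ and $C_\kappa^{(1/2)}$ respectively (these being exactly the two Jack parameters $\alpha=2,\tfrac12$ attached, via $\alpha=2/\beta$, to the symmetry classes AI and AII; cf.\ the eigenoperator \eqref{4.0}). This is the content of \cite{FS09}, and the argument runs parallel to the derivation of \eqref{16.jlX} in \cite{Ta84,FR09}: use the stated congruence invariance of the measure to collapse the $X$‑average onto a Haar integral over the unitary group, then evaluate that integral by decomposing the integrand into matrix coefficients and applying Schur orthogonality. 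I carry this out for the complex symmetric case; the self dual case follows by the substitutions noted below.

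\emph{Step 1 (reduction to a group integral).} By hypothesis the measure on $\mathcal S_N(\mathbb C)$ is invariant under $X\mapsto UXU^T$, $U\in U(N)$. Writing the Autonne–Takagi factorisation $X=V\Sigma V^T$ with $V\in U(N)$ and $\Sigma=\diag(\sigma_1,\dots,\sigma_N)\ge0$ (so that $\Sigma^2$ carries the eigenvalues of $XX^\dagger=X\bar X$), the congruence invariance forces the law of $V$ to be Haar on $U(N)$, independently of $\Sigma$. Since $X$ complex symmetric gives $X^\dagger=\bar X$, and since $C_\kappa^{(\alpha)}$ depends only on the eigenvalues and so is unchanged under cyclic reorderings of a matrix product, one has
\[
C_\kappa^{(2)}(AX)=C_\kappa^{(2)}\big((V^TAV)\,\Sigma\big),\qquad
C_\mu^{(2)}(BX^\dagger)=\overline{C_\mu^{(2)}\big((V^T\bar BV)\,\Sigma\big)},
\]
the last step being valid since $C_\mu^{(2)}$ has rational coefficients. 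Hence the left‑hand side of the first line of \eqref{t.1} equals $\big\langle\, I_{\kappa\mu}(A,B;\Sigma)\big\rangle_{\Sigma}$ with
\[
I_{\kappa\mu}(A,B;\Sigma)=\int_{U(N)}C_\kappa^{(2)}\big((V^TAV)\,\Sigma\big)\,\overline{C_\mu^{(2)}\big((V^T\bar BV)\,\Sigma\big)}\;dV .
\]

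\emph{Step 2 (evaluation of the group integral) and the self dual case.} As a polynomial in the entries of $V$ (of degree $2|\kappa|$, not involving $\bar V$), the map $V\mapsto C_\kappa^{(2)}((V^TAV)\Sigma)$ lies in the $U(N)$‑span of matrix coefficients of the irreducible polynomial $GL_N(\mathbb C)$‑module $V_{2\kappa}$ of highest weight $2\kappa$: indeed $V^TAV\in\mathrm{Sym}^2\mathbb C^N$, Littlewood's plethysm gives $\mathrm{Sym}^{\bullet}(\mathrm{Sym}^2\mathbb C^N)\cong\bigoplus_\lambda V_{2\lambda}$ (each summand once), and the eigenoperator characterisation \eqref{4.0} pins $C_\kappa^{(2)}$ down, up to scalar, as the $O(N)$‑spherical vector of the single summand $V_{2\kappa}$; the conjugated factor behaves likewise with $V_{2\mu}$. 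By Schur orthogonality of matrix coefficients over $U(N)$ the integral therefore vanishes unless $2\kappa=2\mu$, i.e.\ $\kappa=\mu$, and for $\kappa=\mu$ it equals $(\dim V_{2\kappa})^{-1}$ times the $U(N)$‑invariant pairing of the two $O(N)$‑spherical vectors; by the $\beta=1$ (James–Constantine type) zonal integration formula and the homogeneity \eqref{H} this pairing reassembles the data exactly into $C_\kappa^{(2)}(AB)\,C_\kappa^{(2)}(\Sigma^2)=C_\kappa^{(2)}(AB)\,C_\kappa^{(2)}(XX^\dagger)$. Since the principal specialisation $s_{2\kappa}((1)^N)$ is precisely $\dim V_{2\kappa}$, one gets $I_{\kappa\kappa}=C_\kappa^{(2)}(AB)\,C_\kappa^{(2)}(XX^\dagger)/s_{2\kappa}((1)^N)$; averaging over $\Sigma$ (equivalently over $X$) yields the first line of \eqref{t.1}. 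For the self dual case one repeats Steps 1–2 with $U(N)/O(N)$ replaced by $U(2N)/\mathrm{Sp}(N)$, the congruence $X\mapsto UXU^D$ with $U^D=Q_{2N}U^TQ_{2N}^T$, the plethysm $\mathrm{Sym}^{\bullet}(\wedge^2\mathbb C^{2N})\cong\bigoplus_\lambda V_{\lambda^2}$, and $C_\kappa^{(1/2)}$ as the $\mathrm{Sp}(N)$‑spherical functions; here $\dim V_{\kappa^2}=s_{\kappa^2}((1)^{2N})$, and the double degeneracy of the spectrum of a self dual matrix is exactly what makes $C_\kappa^{(1/2)}$ well defined (each eigenvalue counted once), giving the second line of \eqref{t.1} verbatim.

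\emph{Main obstacle.} The delicate step is the normalisation accounting in Step 2: one must verify that the constant coming out of Schur orthogonality, combined with the normalisations \eqref{15.ckp}--\eqref{5.1f} of $C_\kappa^{(\alpha)}$, is \emph{exactly} $s_{2\kappa}((1)^N)$ (resp.\ $s_{\kappa^2}((1)^{2N})$) and not that quantity multiplied by a residual $\kappa$‑dependent factor built from the hook quantities \eqref{2.2h}. A secondary, purely bookkeeping, difficulty — already present in Step 1 — is keeping the conjugations straight ($X^\dagger$ versus $\bar X$, the appearance of $\bar B$, and transpose versus adjoint in $V^TAV$), since it is exactly this that produces $AB$ and $XX^\dagger$ on the nose on the right‑hand side. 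One may bypass the representation theory at the price of heavier combinatorics, by expanding each $C_\kappa^{(\alpha)}$ in the power‑sum basis and computing the resulting Haar averages of products $\prod_i\tr\!\big((V^TAV\,\Sigma)^{\lambda_i}\big)\,\overline{\prod_j\tr\!\big((V^T\bar BV\,\Sigma)^{\mu_j}\big)}$ by the Weingarten/unitary integration rules; the same two obstacles then reappear as the task of resumming the pair‑partition sums.
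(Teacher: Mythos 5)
First, a point of orientation: the paper does not prove this proposition at all --- it is imported verbatim from Feng and Song \cite{FS09} as assumed background (in the same way that \eqref{16.jlX} is imported from \cite{Ta84,FR09}), so there is no in-paper argument to compare against. Judged on its own terms, your proposal has the right architecture, and it is very likely the architecture underlying \cite{FS09}: Takagi factorisation (or, more simply, the double-averaging trick $\langle F(X)\rangle_X=\langle\langle F(UXU^T)\rangle_U\rangle_X$) reduces everything to a Haar integral over $U(N)$; the multiplicity-free plethysms $\mathrm{Sym}^{\bullet}(\mathrm{Sym}^2\mathbb C^N)\cong\bigoplus_\lambda V_{2\lambda}$ and $\mathrm{Sym}^{\bullet}(\wedge^2\mathbb C^{2N})\cong\bigoplus_\lambda V_{\lambda^2}$ place each factor of the integrand in a single isotypic component; and Schur orthogonality then yields the $\delta_{\kappa,\mu}$ together with a factor $1/\dim V_{2\kappa}=1/s_{2\kappa}((1)^N)$. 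Your bookkeeping of the conjugations in Step 1 is also correct ($X^\dagger=\bar X$, $V^T\bar V=\mathbb I_N$, hence $XX^\dagger=V\Sigma^2V^\dagger$).

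There is, however, a genuine gap at the decisive step, which you partly concede in your ``main obstacle'' paragraph. The sentence asserting that the invariant pairing ``reassembles the data exactly into $C_\kappa^{(2)}(AB)\,C_\kappa^{(2)}(\Sigma^2)$'' is where the entire analytic content of \eqref{t.1} lives, and it is asserted rather than derived. Two things are missing: (a) a reason why the pairing of the two translated spherical vectors \emph{factorises} into a product of two zonal polynomials at all, and why the matrix argument of the first factor is $AB$ on the nose (recall the two factors of the integrand carry $A$ and $\bar B$ respectively, and one of them is complex conjugated, so the cancellation of conjugates must be exhibited, not assumed) --- this splitting property is essentially equivalent to the identity being proved; and (b) the verification that the accumulated normalisation is exactly $1/s_{2\kappa}((1)^N)$ with no residual hook-length factor from \eqref{15.ckp}--\eqref{2.2h}. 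A workable repair is to split the task: first prove, by invariance and your Schur-orthogonality argument, that the average equals $\delta_{\kappa,\mu}\,c_\kappa\,C_\kappa^{(2)}(AB)\,\langle C_\kappa^{(2)}(XX^\dagger)\rangle_X$ for some constant $c_\kappa$ independent of $A$, $B$ and of the particular invariant measure (this already needs the splitting lemma, e.g.\ the $\beta=1$ case of \eqref{16.jlX} applied to the pairing); then pin down $c_\kappa$ by evaluating both sides at a single convenient specialisation, using \eqref{t.2} to convert between $C_\kappa^{(2)}(\mathbb I_N)$ and $s_{2\kappa}((1)^N)$. As it stands, the proposal is a correct roadmap but not a complete proof; closing it amounts to reproducing the Feng--Song computation that the paper chooses simply to cite.
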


\begin{remark}
The quantities $ s_{2\kappa}((1)^N),  s_{\kappa^2}((1)^{2N})$ appearing above can be substituted according to
\cite[Eqns.~(2.62), (2.63)]{Se23}
\begin{equation}\label{t.2}
 s_{\kappa^2}((1)^{2N}) = {1 \over | \kappa |!} C_\kappa^{(1/2)}(\mathbb I_N) [2N - 1]_\kappa^{(1/2)}, \quad
 s_{2 \kappa}((1)^N) = {1 \over | \kappa |!} C_\kappa^{(2)}(\mathbb I_N) [N/2 + 1/2]_\kappa^{(2)}.
 \end{equation}
 \end{remark} 
 
 \subsection{Proof of Proposition \ref{P2}}\label{S2.2}
 Consider first (\ref{6.0v+}).
Use  (twice) of the dual Cauchy identity (\ref{SM2})
with $\alpha = 2$
on the LHS  gives for this the expanded form
\begin{equation}\label{7.X}
\prod_{l=1}^k (w_l z_l)^N \sum_{\mu, \kappa \subseteq (k)^N} (-1)^{|\mu| + | \kappa|} P_{\mu'}^{(1/2)}(1/\mathbf z) P_{\kappa'}^{(1/2)}(1/  \bar{\mathbf w})
\langle P_\kappa^{(2)}(X) P_\mu^{(2)}(X^\dagger) \rangle_{X \in  {\rm G} \mathcal S_N},
 \end{equation}
 where the notation $1/\mathbf z$ (similarly $1/\mathbf w$) is used to denote the vector with entries the reciprocal of the entries of $\mathbf z$.
 The average herein can be evaluated by making use of the first matrix integral in (\ref{SM2}), with the RHS further simplified by use of 
(\ref{15.ckp}),  (\ref{7.X1}) and (\ref{t.2}) to read \cite[Prop.~2.15]{Se23} 
\begin{equation}\label{7.X1a} 
\langle P_\kappa^{(2)}(A X) P_\mu^{(2)}( B X^\dagger) \rangle_{X \in  {\rm G} \mathcal S_N} = \delta_{\kappa, \mu} P_\kappa^{(2)}(AB)  h_{\kappa'}|_{\alpha = 1/2}.
\end{equation}
Substituting (\ref{7.X1a}) in (\ref{7.X}), replacing $\kappa$ by $\kappa'$ in the summand, and making use of 
(\ref{5.1f}) with $\alpha = 2$, $N=k$,
shows that the LHS of (\ref{6.0v+}) is equal to
\begin{equation}\label{7.X2}
\prod_{l=1}^k (w_l z_l)^N \sum_{ \kappa \subseteq (N)^k}  P_{\kappa}^{(1/2)}(1/\mathbf z) P_{\kappa}^{(1/2)}(1/\bar{\mathbf w}) P_{\kappa'}^{(2)}((1)^N) 2^{-|\kappa|}
{[2k]_\kappa^{(1/2)} \over   P_{\kappa}^{(1/2)}((1)^k)}.
\end{equation}

A number of terms in (\ref{7.X2}) can be recognised as an example of the group integral (\ref{16.jlX})
with $\alpha = 1/2$ with $N \mapsto k$ and $X$ drawn from GinSE${}_k$. Thus, with $\tilde{D}_i :=
D_i  \otimes \mathbb I_2$, ($i=1,2$),
\begin{align}\label{7.X3}
\langle P_\kappa^{(1/2)}(\tilde{D}_1^{-1}  X^\dagger \bar{\tilde{D}}_2^{-1}   X) \rangle_{X \in {\rm GinSE}_k}& =
{ P_{\kappa}^{(1/2)}(1/\mathbf z)   P_{\kappa}^{(1/2)}(1/\bar{\mathbf w}) 
\over (P_{\kappa}^{(1/2)}(1)^k))^2  } 
  \langle  P_\kappa^{(1/2)}(X X^\dagger) \rangle_{X \in {\rm GinSE}_k} \nonumber \\
  & = { P_{\kappa}^{(1/2)}(1/\mathbf z)   P_{\kappa}^{(1/2)}(1/\bar{\mathbf w}) 
\over (P_{\kappa}^{(1/2)}(1)^k))  }  2^{-{|\kappa|}} [2k]_\kappa^{(1/2)}, 
\end{align}
where the second line follow by first changing variables $W = X X^\dagger$, 
then further changing variables to the eigenvalues and eigenvectors of $W$,
using the fact that the independent eigenvalues (they are otherwise doubly
degenerate) have PDF given by ME${}_{4,k}[\lambda e^{-2\lambda}]$
(see \cite[Prop.~3.2.2 with $m=n=k$, $\beta = 4$]{Fo10}), then making use of
(\ref{7.X1}). Consequently, we have that (\ref{7.X2}) is equal to
\begin{equation}\label{7.X4}
\prod_{l=1}^k (w_l z_l)^N \sum_{ \kappa \subseteq (N)^k}   P_{\kappa'}^{(2)}((1)^N)
\langle P_\kappa^{(1/2)}(\tilde{D}_1^{-1} X^\dagger \bar{\tilde{D}}_2^{-1} X) \rangle_{X \in {\rm GinSE}_k} 
 = \prod_{l=1}^k (w_l z_l)^N \Big \langle \prod_{l=1}^k ( 1 +  \gamma_l)^N \Big \rangle_{ {\rm GinSE}_k} ,
 \end{equation}
where the equality follows from   the dual Cauchy identity (\ref{SM2}), and $\{\gamma_l \}$ are
independent eigenvalues of the otherwise twice degenerate matrix $\tilde{D}_1^{-1} X^\dagger \bar{\tilde{D}}_2^{-1} X$.

Consider now the RHS of  (\ref{6.0v+}). With $A,B,C,D$ matrices of size $2k \times 2k$ blocks, use of the
 determinant identity
\begin{equation}\label{7.X5} 
\det   \begin{bmatrix} A & B \\ C & D \end{bmatrix}   = \det AD \det (\mathbb I_{2k} - A^{-1} B D^{-1} C),
   \end{equation}
   with the final determinant herein written as a product over its eigenvalues, allows for a rewrite.
    Taking into consideration
   that the eigenvalues are doubly degenerate then allows  the RHS of (\ref{7.X4}) to be identified.
   
   Following the general scheme of the above proof also provides a derivation of (\ref{6.0w+}).
For this we begin by making using of the dual Cauchy identity (\ref{SM2}) with
 $\alpha = 1/2$
on the LHS to give the expanded form
\begin{equation}\label{7.X+}
\prod_{l=1}^k (w_l z_l)^N \sum_{\mu, \kappa \subseteq (k)^N} (-1)^{|\mu| + | \kappa|} P_{\mu'}^{(2)}(1/\mathbf z) P_{\kappa'}^{(2)}(1/  \bar{\mathbf w})
\langle P_\kappa^{(1/2)}(X) P_\mu^{(1/2)}(X^\dagger) \rangle_{X \in  {\rm G}  \mathcal Q \mathcal S_{2N}}.
 \end{equation}   
 Next we make use of the second matrix integral in (\ref{SM2}), further simplified to read \cite[Prop.~2.17]{Se23} 
\begin{equation}\label{7.X1+} 
\langle P_\kappa^{(1/2)}(A X) P_\mu^{(1/2)}( B X^\dagger) \rangle_{X \in  {\rm G} \mathcal Q  \mathcal S_{2N}} = \delta_{\kappa, \mu} P_\kappa^{(1/2)}(AB)  h_{\kappa'}|_{\alpha = 2},
\end{equation}
substitute in (\ref{7.X+}) and replace $\kappa$ by $\kappa'$ in the summand. With this done, and noting $h_\kappa|_{\alpha = 2} = 2^{|\kappa|} [k/2]_\kappa^{(2)}/
P_\kappa^{(2)}((1)^k)$ as follows from (\ref{5.1f}) we have that the LHS of  (\ref{6.0w+}) is equal to
\begin{equation}\label{7.Y+}
\prod_{l=1}^k (w_l z_l)^N \sum_{\mu, \kappa \subseteq (N)^k} P_{\kappa}^{(2)}(1/\mathbf z) P_{\kappa}^{(2)}(1/  \bar{\mathbf w})
P_{\kappa'}^{(1/2)}((1)^N) {2^{|\kappa|} [k/2]_\kappa^{(2)} \over
P_\kappa^{(2)}((1)^k)};
\end{equation}
cf.~(\ref{7.X2}).

 The group integral (\ref{16.jlX})
with $\alpha = 2$ with $N \mapsto k$ and $X$ drawn from GinOE${}_k$ allows us to substitute
\begin{equation}\label{7.Y3}
P_{\kappa}^{(2)}(1/\mathbf z) P_{\kappa}^{(2)}(1/  \bar{\mathbf w})
 {2^{|\kappa|} [k/2]_\kappa^{(2)} \over
P_\kappa^{(2)}((1)^k)} = 
\langle P_\kappa^{(2)}({D}_1^{-1}  X^\dagger \bar{{D}}_2^{-1}   X) \rangle_{X \in {\rm GinOE}_k} .
\end{equation}
The dual Cauchy identity (\ref{SM2}) can now be used to reduce (\ref{7.Y+}) to the form
\begin{equation}\label{7.Y4}
 \prod_{l=1}^k (w_l z_l)^N \Big \langle \prod_{l=1}^k ( 1 +  \gamma_l)^N \Big \rangle_{ {\rm GinOE}_k} ,
 \end{equation}
 where here $\{\gamma_l \}$ are the eigenvalues of ${D}_1^{-1}  X^\dagger \bar{{D}}_2^{-1}   X$.
 Considering now the RHS of (\ref{6.0w+}) with the determinant therein rewritten according to 
 (\ref{7.X5}) gives precisely this expression, thus establishing the validity of the identity.
 
 \section{Discussion}\label{S3}
 Setting each $z_i$ and $w_i$ equal to $z$, and making use too of (\ref{7.X5}), we see that the RHS of
 each identity in Proposition \ref{P2} can be simplified.
 
 \begin{cor}\label{C1}
 We have
\begin{equation}\label{7.U1} 
\Big  \langle |  \det(z \mathbb I_N - Z)  |^{2k} \Big \rangle_{Z \in {\rm G} \mathcal S_N}  =
\langle \det ( |z|^2 \mathbb I_k + W )^N \rangle_{W \in {\rm ME}_{4,k}[\lambda e^{-2 \lambda}]},
\end{equation}
and
\begin{equation}\label{7.U2} 
\Big  \langle |  \det(z \mathbb I_{2N} - Z)  |^{k} \Big \rangle_{Z \in  {\rm G}  \mathcal Q \mathcal A_{2N}}  =
\langle \det ( |z|^2 \mathbb I_k + W )^N \rangle_{W \in {\rm ME}_{1,k}[\lambda^{-1/2} e^{- \lambda/2}]}.
\end{equation}
\end{cor}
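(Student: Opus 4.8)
The plan is to obtain Corollary \ref{C1} as a direct specialisation of Proposition \ref{P2}, using the determinant identity (\ref{7.X5}) to simplify the right-hand sides. First I would set $z_l = z$ and $w_l = z$ for all $l = 1, \dots, k$ in (\ref{6.0v+}). On the left-hand side this collapses the product of characteristic polynomials to $\det(z\mathbb I_N - Z)^k \det(\bar z \mathbb I_N - \bar Z)^k = |\det(z\mathbb I_N - Z)|^{2k}$, giving the left-hand side of (\ref{7.U1}). With $D_1 = z\mathbb I_k$ and $\bar D_2 = \bar z \mathbb I_k$, the block matrix on the right becomes $\begin{bmatrix} z\mathbb I_{2k} & -Y \\ Y^\dagger & \bar z \mathbb I_{2k} \end{bmatrix}$ with $Y \in \mathrm{GinSE}_k$.

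Next I would apply (\ref{7.X5}) with $A = z\mathbb I_{2k}$, $B = -Y$, $C = Y^\dagger$, $D = \bar z\mathbb I_{2k}$: the determinant equals $\det(z\bar z \mathbb I_{2k}) \det(\mathbb I_{2k} - z^{-1}(-Y)\bar z^{-1} Y^\dagger) = |z|^{4k} \det(\mathbb I_{2k} + |z|^{-2} Y Y^\dagger) = \det(|z|^2 \mathbb I_{2k} + Y Y^\dagger)$. Raising to the power $N/2$ and invoking Remark \ref{R1} — the eigenvalues of $YY^\dagger$ for $Y\in\mathrm{GinSE}_k$ are doubly degenerate, so the $N/2$-th power including each eigenvalue once equals $\det(|z|^2\mathbb I_k + W)^N$ with $W$ the matrix of independent eigenvalues — and recalling from the proof of Proposition \ref{P2} (the passage around (\ref{7.X3})) that these independent eigenvalues have PDF $\mathrm{ME}_{4,k}[\lambda e^{-2\lambda}]$, yields (\ref{7.U1}). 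The derivation of (\ref{7.U2}) is entirely parallel: specialise (\ref{6.0w+}) with all $z_l = z$, $w_l = z$, so the left-hand side becomes $|\det(z\mathbb I_{2N} - Z)|^k$; then (\ref{7.X5}) turns $\det\begin{bmatrix} z\mathbb I_k & -Y \\ Y^\dagger & \bar z\mathbb I_k\end{bmatrix}^N$ into $\det(|z|^2\mathbb I_k + YY^\dagger)^N$ with $Y\in\mathrm{GinOE}_k$, and one identifies the eigenvalue PDF of $YY^\dagger$ as $\mathrm{ME}_{1,k}[\lambda^{-1/2} e^{-\lambda/2}]$ via \cite[Prop.~3.2.2]{Fo10} with $\beta = 1$, $m = n = k$.

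The only genuine subtlety — and the main point requiring care rather than a real obstacle — is the bookkeeping of the doubly degenerate spectra and the associated square-root/power conventions, exactly the issue flagged in Remark \ref{R1}. For (\ref{7.U1}) one must check that the $N/2$-th power of a $2k\times 2k$ determinant with doubly degenerate eigenvalues equals the $N$-th power of the corresponding $k\times k$ object, and that the exponent $N$ on the right of (\ref{7.U1}) comes out correctly; for (\ref{7.U2}), since $Z \in \mathcal{QA}_{2N}(\mathbb C)$ itself has doubly degenerate spectrum, one should confirm that the half-integer power structure on the left and the weight $\lambda^{-1/2}$ in $\mathrm{ME}_{1,k}$ are consistent. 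Both checks amount to matching the parameters in (\ref{7.X1}) and the Laguerre-type eigenvalue densities already used in Section \ref{S2.2}, so beyond this accounting there is nothing new to prove.
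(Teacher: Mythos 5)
Your proposal is correct and follows essentially the same route as the paper: specialise Proposition \ref{P2} to equal arguments, apply the block determinant identity (\ref{7.X5}) to reduce the right-hand sides to $\det(|z|^2\mathbb I + YY^\dagger)$ raised to the appropriate power, and then change variables to $W = YY^\dagger$ whose independent eigenvalues carry the Laguerre-type densities ${\rm ME}_{4,k}[\lambda e^{-2\lambda}]$ and ${\rm ME}_{1,k}[\lambda^{-1/2}e^{-\lambda/2}]$ already identified in Section \ref{S2.2}. Your handling of the doubly degenerate spectra and the square-root/power conventions of Remark \ref{R1} is also consistent with the paper's treatment.
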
 

\begin{proof}
Proceeding as already described in the proof of Proposition \ref{P2} gives for the respective right hand sides
$$
\Big \langle \prod_{l=1}^k (|z|^2 \mathbb I_k + \gamma_l)^N \Big \rangle_{Y \in {\rm GinSE}_k}, \quad
\Big \langle \prod_{l=1}^k ( |z|^2 \mathbb I_k + \gamma_l)^N \Big \rangle_{Y \in {\rm GinOE}_k},
$$
where $\{\gamma_l\}$ are the independent eigenvalues of $Y Y^\dagger$ (in the first case the
eigenvalues of $Y Y^\dagger$ are doubly degenerate).
We now change variables $W = Y Y^\dagger$, which as already noted in the context of the proof of
Proposition  \ref{P2} gives rise to the ensembles  ${\rm ME}_{4,k}[\lambda e^{-2 \lambda}]$ and
$ {\rm ME}_{1,k}[\lambda^{-1/2} e^{- \lambda/2}]$ as in the stated identities.
\end{proof}

Dualities very similar to (\ref{7.U1}) and (\ref{7.U2}) are known from earlier literature. Thus \cite[set $\Sigma = \mathbb I_N$ in
Corollary 3]{FR09},
\begin{align}\label{7.U3}
\Big  \langle   \det(z \mathbb I_N - Z)^{2k} \Big \rangle_{Z \in {\rm GinOE}_N}  & =
\langle \det ( |z|^2 \mathbb I_k + W )^N \rangle_{W \in {\rm ME}_{4,k}[ e^{- \lambda}]}, \nonumber \\
\Big  \langle   \det(z \mathbb I_{2N} - Z)^{k} \Big \rangle_{Z \in {\rm GinSE}_N}  & =
\langle \det ( |z|^2 \mathbb I_k + W )^N \rangle_{W \in {\rm ME}_{1,k}[e^{- \lambda}]}.
\end{align}
These identities also follow from recently derived companion dualities to those of  Proposition \ref{P2} \cite[Eqns.~(4.62) and (4.96)]{Se23}
(see too \cite{LZ24})
\begin{equation}\label{5.47}
\Big \langle \prod_{j=1}^{2k} \det (G - z_j \mathbb I_N) \Big \rangle_{G \in {\rm GinOE}_N} =
\Big \langle \det \begin{bmatrix} X & Z \\ -Z & X^\dagger  \end{bmatrix}^{N/2} \Big \rangle_{X \in {\rm G} \mathcal A_{2k}} 
\end{equation}
and
\begin{equation}\label{5.48}
\Big \langle \prod_{j=1}^{2k} \det (G - z_j \mathbb I_{2N}) \Big \rangle_{G \in {\rm GinSE}_N} =
\Big \langle \det \begin{bmatrix} X & Z \\ -Z & X^\dagger  \end{bmatrix}^{N} \Big \rangle_{X \in {\rm G} \mathcal S_{2k}}. 
\end{equation}
This is seen by specialising the left hand sides to each $z_i$ equal,
and following the  analogous working to that for the proof of Corollary  \ref{C1} in relation to the right hand sides.

We know that global scaling for GinOE${}_N$ and GinSE${}_N$ corresponds to $Z \mapsto {1 \over \sqrt{N}} Z$,
with corresponding density then being given by the circular law ${1 \over \pi} \mathbbm 1_{|z|<1}$. Global scaling
for ${\rm G} \mathcal S_N$ ($ \mathcal Q \mathcal A_{2N}$) to give the same circular density is
$Z \mapsto \sqrt{2 \over N} Z$ ($Z \mapsto \sqrt{1 \over 2 N} Z$) \cite{AAKP24}. The dualities with this scaling
allow for the large $N$ limiting form of the ratio of characteristic polynomials, comparing (\ref{7.U1}) and
(\ref{7.U2}) with their companions in (\ref{7.U3}).

 \begin{prop}\label{P3.1}
 Suppose $|z| < 1$. Define
  \begin{equation}\label{Ws}
 W_{\beta,n}(a)  = \int_0^\infty dx_1 \cdots  \int_0^\infty dx_n \, \prod_{l=1}^n x_l^a e^{-x_l} \prod_{1 \le j < k \le n} |x_k - x_j|^\beta
  \end{equation}
  (this is a limiting case of the Selberg integral, which has an evaluation as a product of gamma functions; see \cite[Prop.~4.7.3]{Fo10}).
 We have
 \begin{equation}\label{K1}
 \lim_{N \to \infty} \Big ( \Big  \langle |  \det(z \mathbb I_N - Z)  |^{2k} \Big \rangle_{Z \in \sqrt{2 \over N} {\rm G} \mathcal S_N}  \Big /
 \Big  \langle   \det(z \mathbb I_N - Z)^{2k} \Big \rangle_{Z \in  {1 \over \sqrt{N}}  {\rm GinOE}_N} \Big ) = { W_{4,k}(0) \over W_{4,k}(1) }
 (1 - |z|^2)^k
 \end{equation}
 and
 \begin{equation}\label{K2}
 \lim_{N \to \infty} \Big ( \Big  \langle |  \det(z \mathbb I_N - Z)  |^{k} \Big \rangle_{Z \in \sqrt{1 \over 2 N}  {\rm G} \mathcal Q \mathcal A_{2N}}  \Big /
 \Big  \langle   \det(z \mathbb I_N - Z)^{k} \Big \rangle_{Z \in  {1 \over \sqrt{N}}  {\rm GinSE}_N}  \Big ) =  { W_{1,k}(0) \over W_{1,k}(-{1\over 2}) } (1 - |z|^2)^{-{k \over 2}}.
 \end{equation}
 \end{prop}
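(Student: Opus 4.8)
The plan is to start from the finite-$N$ dualities of Corollary \ref{C1} and the companion dualities (\ref{7.U3}), insert the stated global rescalings of the matrices, and then evaluate the large-$N$ limit of the resulting ratio of eigenvalue integrals by a Laplace-type (saddle point) argument. First I would record the effect of the rescalings on the right-hand sides. Replacing $Z$ by $\sqrt{2/N}\,Z$ in (\ref{7.U1}) multiplies the characteristic polynomial argument and hence rescales the Laguerre-type weight in the $W$-ensemble; concretely the matrix $W$ on the right is that of the ensemble ${\rm ME}_{4,k}[\lambda e^{-2\lambda}]$ but with an overall factor $(2/N)$ built into $|z|^2$ and into $W$, so that after the substitution $\lambda_l = (N/2)\mu_l$ the integrand becomes $\prod_l \mu_l\, e^{-N\mu_l}$ times $\prod_l(|z|^2+\mu_l)^N$ times the Vandermonde to the power $4$, up to $N$-dependent prefactors. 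The same manipulation applied to the GinOE side of (\ref{7.U3}), where the weight is $e^{-\lambda}$ rather than $\lambda e^{-\lambda}$ and the scaling is $1/\sqrt N$, produces $\prod_l e^{-N\mu_l}\prod_l(|z|^2+\mu_l)^N$ times the same Vandermonde power. Dividing, all the exponential factors, the Vandermonde factors, and the $(|z|^2+\mu_l)^N$ factors are common; the ratio of integrands is exactly $\prod_{l=1}^k \mu_l$, plus the mismatch in $N$-dependent normalising constants coming from the two different changes of variables.

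The second step is the asymptotic evaluation. Both integrals are of the form $\int \big(\prod_l g(\mu_l)\big) e^{N\sum_l \varphi(\mu_l)}\prod_{j<k}|\mu_k-\mu_j|^4\, d\mu$ with $\varphi(\mu) = \log(|z|^2+\mu) - \mu$ and $g$ either $\mu$ or $1$; since $|z|<1$ the function $\varphi$ has a unique maximum on $(0,\infty)$ at $\mu_\star = 1 - |z|^2 > 0$, and it is nondegenerate there. A standard multivariate Laplace analysis localises each $\mu_l$ near $\mu_\star$, and to leading order the ratio of the two $k$-fold integrals is just $\prod_{l=1}^k g(\mu_\star) = (1-|z|^2)^k$ coming from the extra $\prod_l \mu_l$ in the numerator, multiplied by the ratio of the leading constants. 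The cleanest way to package the constant is to note that the fluctuation integral around $\mu_\star$, after rescaling $\mu_l = \mu_\star + t_l/\sqrt N$, converts $e^{N\sum\varphi}$ together with $\prod_{j<k}|\mu_k-\mu_j|^4$ into a Gaussian weight times a Vandermonde to the fourth power in the $t_l$; this is precisely (a Gaussian reweighting of) the structure whose normalisation is controlled by the Selberg-type integral $W_{4,k}$ — but one has to be careful that the power of $\mu$ entering $W_{4,k}(a)$ is the exponent of $\mu_l$ in the \emph{subleading} weight, i.e. the $\lambda^a$ in ${\rm ME}_{4,k}[\lambda^a\cdots]$, not a power produced by the saddle. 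I would make this precise by writing $\prod_l \mu_l^a = \prod_l(\mu_\star + t_l/\sqrt N)^a = \mu_\star^{ak}\big(1+O(1/\sqrt N)\big)$, so that the genuine $W_{4,k}(a)$-dependence is the ratio $W_{4,k}(0)/W_{4,k}(1)$ only if one tracks it through the finite-$N$ normalisation constant $C$ rather than through the saddle — in other words the factor $W_{4,k}(0)/W_{4,k}(1)$ is the ratio of the finite-$N$ normalisations of the two hypergeometric-type averages, which one reads off directly from (\ref{7.X1}) (equivalently from the Selberg integral) before taking $N\to\infty$, and $(1-|z|^2)^k$ is the genuine saddle-point contribution. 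The quaternion/self-dual case (\ref{K2}) is identical in structure: the scalings $\sqrt{1/2N}$ and $1/\sqrt N$ together with the weights $\lambda^{-1/2}e^{-\lambda/2}$ versus $e^{-\lambda}$ in (\ref{7.U2}) and (\ref{7.U3}) leave a mismatch of $\prod_l \mu_l^{-1/2}$ in the integrand, which evaluates at the saddle to $(1-|z|^2)^{-k/2}$, while the normalisation mismatch is $W_{1,k}(0)/W_{1,k}(-1/2)$; here $\beta = 1$ throughout because the relevant $W$-ensembles are ${\rm ME}_{1,k}$.

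I expect the main obstacle to be the bookkeeping of the $N$-dependent constants: one must show that after the two different rescalings and changes of variables, every $N$-dependent prefactor cancels between numerator and denominator except for a finite ratio, and that this ratio is exactly $W_{\beta,k}(a_{\rm num})/W_{\beta,k}(a_{\rm den})$. The safest route is to avoid computing the constants from scratch and instead use the closed-form finite-$N$ evaluations available via (\ref{7.X1}) (the generalised Pochhammer / $C_\kappa^{(\alpha)}$ formula), or equivalently to note that the ratio $\langle\det(|z|^2\mathbb I_k+W)^N\rangle_{{\rm ME}_{\beta,k}[\lambda^a e^{-c\lambda}]}$ depends on the weight only through the prefactor $W_{\beta,k}(a)$ times an $(a$-independent) function of $|z|^2$ and $N$ that is common to both ensembles once the scalings are matched to the same circular law. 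A secondary technical point is justifying the interchange of limit and integral in the Laplace argument — uniform integrability near $\mu=0$ (where $\lambda^{-1/2}$ is integrable for $\beta=1$) and control of the tail as $\mu\to\infty$ (where $\varphi(\mu)\to-\infty$) — but these are routine given $|z|<1$ and the Gaussian-times-Vandermonde structure of the localised integrand.
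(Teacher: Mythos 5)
Your proposal follows essentially the same route as the paper's proof: both averages are rewritten via the finite-$N$ dualities (\ref{7.U1}), (\ref{7.U2}) and (\ref{7.U3}) as $\langle \det(|z|^2\mathbb I_k + W)^N\rangle$ over ${\rm ME}_{\beta,k}$ with weight $\lambda^a e^{-N\lambda}$, Laplace's method is applied with the saddle of $\log(|z|^2+\lambda)-\lambda$ at $1-|z|^2$ so that the weight mismatch $\prod_l \lambda_l^{a}$ evaluates at the saddle to the stated power of $(1-|z|^2)$, and the Selberg-integral ratio $W_{\beta,k}(0)/W_{\beta,k}(a)$ is attributed to the finite-$N$ normalisation constants. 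The one point you flag as the main obstacle --- verifying that all $N$-dependent prefactors arising from the two different rescalings cancel --- is precisely the step the paper also treats tersely (it writes the normalisations directly as $W_{\beta,k}(a)$ without tracking the residual powers of $N$ from the change of variables $\lambda\mapsto\lambda/N$), so your account is at the same level of detail as the published argument.
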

 
 \begin{proof}
 Consider (\ref{K1}). According to the duality  (\ref{7.U1}) and the first duality in (\ref{7.U3}), the task can be recast
 as that of calculating
   \begin{equation}\label{Ws1}
  \lim_{N \to \infty} \Big ( 
 \langle \det ( |z|^2 \mathbb I_k + W )^N \rangle_{W \in {\rm ME}_{4,k}[\lambda e^{- N \lambda}]} /
\langle \det ( |z|^2 \mathbb I_k + W )^N \rangle_{W \in {\rm ME}_{4,k}[ e^{- N \lambda}]} \Big ).
 \end{equation}
Define by $W_{\beta,n}(a)[f]$ the multiple integral (\ref{Ws}) with a further factor of $f=f(x_1,\dots,x_n)$ in the integrand.
Then we have that (\ref{Ws1}) is equal to
   \begin{equation}\label{Ws2}
 { W_{4,k}(0) \over W_{4,k}(1) }    \lim_{N \to \infty}  { W_{4,k}(1) \Big [ \prod_{l=1}^k e^{-(N-1)x_l} (|z|^2 + x_l)^N \Big ] \over    W_{4,k}(0) \Big [   \prod_{l=1}^k  e^{-(N-1)x_l}(|z|^2 + x_l)^N \Big ] }.
 \end{equation} 
 To compute the limit $N \to \infty$ we 
 apply Laplace's method to the integrands. A simple calculation shows that in each integration variable $x_l$ the maximum occurs at $1 - |z|^2$. Expanding about this
 point to second order in the exponent gives
 $$
 (1 - |z|^2)^k e^{-N \log (1 - |z|^2)}  \int_{-\infty}^\infty dx_1 \cdots  \int_{-\infty}^\infty dx_k \, \prod_{l=1}^k  e^{-N x_l^2/2} \prod_{1 \le j_1 < j_2 \le k} |x_{j_1} - x_{j_2}|^4
 $$ 
 for the numerator, and the same expression without the factor $(1 - |z|^2)^k$ for the denominator, and so (\ref{K1}) results.
 The derivation of (\ref{K2}) is similar.
\end{proof} 

\begin{remark}\label{R31} ${}$ \\
1.~The result of Proposition \ref{P3.1} is, for $k=1$ in (\ref{K1}), and for $k=2$ in (\ref{K2}), with results obtained in
\cite[Prop.~3.1]{AAKP24}. In the case of the denominators, these cases relate to the computation of
the eigenvalue density, which we know is the circular law for GinOE and GinSE. Thus relative to these base
cases, at a qualitative level, there is a decreasing (increasing) profile for $ {\rm G} \mathcal S_N$ and
$ {\rm G} \mathcal Q \mathcal A_{2N}$ respectively as $|z|$ approaches the
boundary of support \cite{AAKP24,KKR24}. In the Coulomb gas model relating to the Ginibre ensemble
\cite[\S 4]{BF24}, such behaviour is known to distinguish the regimes $\beta < 2$ from $\beta > 2$
\cite{CFTW14,CFTW15,CSA21}. \\
2.~The asymptotics of moments of the (absolute value of) the characteristic polynomial for GinUE matrices, for
which the duality \cite[with $\Sigma = \mathbb I_N$ in Corollary 3]{FR09}
\begin{equation}\label{GE}
\Big  \langle   |\det(z \mathbb I_N - Z)|^{2k} \Big \rangle_{Z \in {\rm GinUE}_N}   =
\langle \det ( |z|^2 \mathbb I_k + W )^N \rangle_{W \in {\rm ME}_{2,k}[ e^{- \lambda}]},
\end{equation}
cf.~(\ref{7.U3}), holds have attracted interest for their relevance to Gaussian multiplicative chaos
\cite{WW19}, their relationship to  Painlev\'e transcendents \cite{DS22}, from the viewpoint
of the Coulomb gas with a point charge insertion \cite{BSY24}. There are also works considering
the  asymptotics of moments of the characteristic polynomial for GinOE matrices (see e.g.~\cite{FT21,Af22}, with the most
recent being \cite{Ki24}. Using (\ref{GE}) and the first duality in (\ref{7.U3}), the method of proof of Proposition \ref{P3.1}
shows that for $|z|<1$
\begin{equation}\label{GE1}
\lim_{N \to \infty} \Big ( \Big  \langle |  \det(z \mathbb I_N - Z)  |^{2k} \Big \rangle_{Z \in  {1 \over \sqrt{N}}  {\rm GinOE}_N}  \Big /
 \Big  \langle   \det(z \mathbb I_N - Z)^{2k} \Big \rangle_{Z \in  {1 \over \sqrt{N}}  {\rm GinUE}_N} \Big ) = 1.
\end{equation}
This was remarked on in \cite{Ki24} as following from results in \cite{Af22} (and therefore holding beyond the
assumption of Gaussian entries). We note that his does not contradict the fluctuation formula (\ref{A.2}) as
$ |  \det(z \mathbb I_N - Z)  |^{2k} = \exp \Big (2k \sum_{l=1}^N \log | z - z_l | \Big )$, where $\{z_l\}$
are the eigenvalues of $Z$. The linear statistic so exhibited thus has a singularity at $z$, which for
$z$ inside the  eigenvalue support  nullifies the
applicability of (\ref{A.2}).
\end{remark}

As noted in the Introduction (see below Remark \ref{R1}), an advantage of zonal polynomial theory relative to
supersymmetric Grassmann integration methods is that they apply to a wider class of measures on matrix
space with a unitary invariance, beyond a Gaussian. On this point, there is a particular non-Gaussian measure on the sets
$\mathcal S_N(\mathbb C)$ and $\mathcal A_N(\mathbb C)$ (this denoting $N \times N$ complex anti-symmetric
matrices) which are already known in the literature from the viewpoint of duality identities \cite{Se23}. This measure
is defined to have a PDF on the matrix spaces proportional to
\begin{equation}\label{7.U5} 
\det (\mathbb I_N + X X^\dagger)^{-N-K}.
\end{equation}
We will denote complex symmetric matrices chosen with this PDF as ${\rm S} \mathcal S_{N,K}$ 
(here the ``S'' standards for spherical; see e.g.~\cite{FF11} for justification).
With $N$ replaced by $2N$, we will consider this PDF in relation to complex self dual matrices from
$\mathcal Q \mathcal A_{2N}(\mathbb C)$ too, which we denote as ${\rm S} \mathcal Q \mathcal A_{2N,K}$.
One recalls \cite[Exercises 3.6 q.3]{Fo10} that for complex matrices (\ref{7.U5}) results by forming
$(A^\dagger A)^{-1/2} B$, where $A$ is a $K \times N$ ($K \ge N$), and $B$ is an $N \times N$ standard
complex Gaussian matrix. However, such a construction no longer holds for matrices from
${\rm S}  \mathcal S_{N,K}$ and ${\rm S} \mathcal Q \mathcal A_{2N,K}$. 

Here we take up the task
of deriving the analogues of the identities of Corollary \ref{C1} in this case. Assumed Jack polynomial
theory for this task is the average \cite{Wa05},  \cite{FS09}
\begin{equation}\label{W}
\langle  C_\kappa^{(\alpha)}(\mathbf x)  \rangle_{{\rm ME}_{2/\alpha,N}[\lambda^{b_1}(1+\lambda)^{-b_1 - b_2 - 2 - 2(N-1)/\alpha} ]}
  =  C_\kappa^{(\alpha)}((-1)^N) 
{[b_1 + 1 + (N-1)/ \alpha]_\kappa^{(\alpha)} \over [-b_2]_\kappa^{(\alpha)} },
\end{equation}
and the series expression for a Jack polynomial based generalisation of the Gauss hypergeometric function
(see \cite[Ch.~13]{Fo10})
\begin{equation}\label{3.40}
  {\vphantom{F}}_2^{\mathstrut} F_1^{(\alpha)}(a_1,a_2;b_1; \mathbf x):=\sum_\kappa  {1 \over |\kappa|!} 
 \frac{[a_1]^{(\alpha)}_\kappa [a_2]^{(\alpha)}_\kappa }{[b_1]^{(\alpha)}_\kappa}
C_\kappa^{(\alpha)}(\mathbf x). 
\end{equation}
In a special case, when all the arguments of $\mathbf x$ are equal, we have the matrix average form \cite[Eq.~(13.7)]{Fo10}
\begin{equation}\label{7.Zc}  
 {\vphantom{F}}_2^{\mathstrut} F_1^{(\alpha)}(-r,-b;-c;(s)^N) =  \Big \langle \prod_{l=1}^r  (1 - x_l s)^N \Big \rangle_{\mathbf x \in {\rm ME}_{2 \alpha,r}[\lambda^{a_1}(1 - \lambda)^{a_2} \mathbbm 1_{0<x<1}]}
 \bigg |_{\substack{a_1 =  - 1 +(b - r+1)\alpha \\ a_2 =  -1 + (c - b - r+1)\alpha}},
 \end{equation}
 valid provided $a_1,a_2 > -1$ and $r$ is a positive integer.

 \begin{prop}\label{C1}
 For $K$ large enough so that the right hand sides are well defined we have
\begin{multline}\label{7.V1} 
\Big  \langle |  \det(z \mathbb I_N - Z)  |^{2k} \Big \rangle_{Z \in {\rm S} \mathcal S_{N,K}}  = |z|^{2N} {\vphantom{F}}_2^{\mathstrut} F_1^{(2)}(-k,-k;- K+1; (-1/|z|^2)^N) \\
=
\langle \det ( |z|^2 \mathbb I_k + W )^N \rangle_{W \in {\rm ME}_{4,k}[\lambda (1 - \lambda)^{2(K-2k)-1} \mathbbm 1_{0<x<1}]},
\end{multline}
and
\begin{multline}\label{7.V2} 
\Big  \langle |  \det(z \mathbb I_{2N} - Z)  |^{k} \Big \rangle_{Z \in  {\rm S}  \mathcal  Q \mathcal A_{2N,K}}  =
|z|^{2N} {\vphantom{F}}_2^{\mathstrut} F_1^{(1/2)}(-k,-k;- 2K-2; (-1/|z|^2)^N) \\
= \langle \det ( |z|^2 \mathbb I_k + W )^N \rangle_{W \in {\rm ME}_{1,k}[\lambda^{-1/2} (1 - \lambda)^{K-k+1/2} \mathbbm 1_{0<x<1}]}.
\end{multline}
\end{prop}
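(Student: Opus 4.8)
The plan is to follow exactly the scheme of the proof of Proposition~\ref{P2} and Corollary~\ref{C1}, but with the Gaussian-weighted matrix averages \eqref{7.X1a} and \eqref{7.X1+} replaced by their spherical-measure analogues. Concretely, I would first establish the spherical counterpart of \eqref{7.X1a}: for $X \in {\rm S}\mathcal S_{N,K}$,
\begin{equation*}
\langle P_\kappa^{(2)}(A X) P_\mu^{(2)}(B X^\dagger) \rangle_{X \in {\rm S}\mathcal S_{N,K}}
= \delta_{\kappa,\mu}\, P_\kappa^{(2)}(AB)\, \frac{[\cdots]_\kappa^{(2)}}{[\cdots]_\kappa^{(2)}},
\end{equation*}
and similarly for ${\rm S}\mathcal Q \mathcal A_{2N,K}$ with $\alpha = 1/2$. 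To do this I use the second matrix integration formula in \eqref{t.1} (which only requires unitary invariance of the measure, so it applies verbatim to the spherical measure) to reduce to $\langle C_\kappa^{(2)}(XX^\dagger) \rangle_{X \in {\rm S}\mathcal S_{N,K}}$; then I change variables $W = XX^\dagger$ and pass to eigenvalues, which for the spherical measure on $\mathcal S_N(\mathbb C)$ gives an ensemble of the form ${\rm ME}_{4,N}[\lambda^{b_1}(1+\lambda)^{-\cdots}]$ (cf.\ the Gaussian case, where it was ${\rm ME}_{4,N}[\lambda e^{-2\lambda}]$); finally I apply the Jack-polynomial average \eqref{W} in place of \eqref{7.X1} to get the product of generalised Pochhammer symbols. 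The key point is that \eqref{W} has exactly the shape needed to produce the $[-k,-k;-K+1]$-type parameters appearing in the ${}_2F_1^{(\alpha)}$ on the RHS of \eqref{7.V1}--\eqref{7.V2}.

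With that spherical average in hand, the rest is bookkeeping that mirrors Section~\ref{S2.2}. I apply the dual Cauchy identity \eqref{SM2} twice to the LHS of \eqref{7.V1} (with $\alpha = 2$), substitute the spherical analogue of \eqref{7.X1a}, relabel $\kappa \mapsto \kappa'$, and use \eqref{5.1f} to rewrite $P_{\kappa}^{(1/2)}((1)^k)$ and the relevant $h$-factors; collecting the resulting sum over $\kappa$ and recognising the combination of $[a_1]_\kappa^{(\alpha)}[a_2]_\kappa^{(\alpha)}/[b_1]_\kappa^{(\alpha)}$ against \eqref{3.40} yields the middle expression $|z|^{2N}\,{}_2F_1^{(2)}(-k,-k;-K+1;(-1/|z|^2)^N)$ (after pulling out $\prod_l z_l^N \bar w_l^N \to |z|^{2N}$ upon setting all $z_i = w_i = z$). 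The passage to the third (matrix-integral) expression is then precisely \eqref{7.Zc}: the ${}_2F_1^{(\alpha)}$ with all arguments equal to $s = -1/|z|^2$ and exponent $N$ is, by \eqref{7.Zc}, an average of $\prod_{l=1}^k (1 - x_l s)^N = |z|^{-2Nk}\prod_l(|z|^2 + x_l|z|^2\cdot(-1))^N$ over a Jacobi-type $\beta$-ensemble on $(0,1)$; matching the parameters $a_1, a_2$ in \eqref{7.Zc} to the weight $\lambda(1-\lambda)^{2(K-2k)-1}$ (resp.\ $\lambda^{-1/2}(1-\lambda)^{K-k+1/2}$) and absorbing the powers of $|z|^2$ gives the stated $\langle \det(|z|^2\mathbb I_k + W)^N\rangle$ form. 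The self-dual case \eqref{7.V2} is identical with $(\alpha,\beta) = (2,4) \to (1/2,1)$ and the GinSE/GinOE roles interchanged, exactly as \eqref{6.0w+} paralleled \eqref{6.0v+}.

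The main obstacle I anticipate is pinning down the exact parameters in the spherical eigenvalue ensemble and in \eqref{W}, and then matching them against \eqref{7.Zc} so that the exponents $2(K-2k)-1$ and $K-k+1/2$ come out correctly. This requires being careful about: (i) the precise exponent $-N-K$ in \eqref{7.U5} and how it translates through the $W = XX^\dagger$ Jacobian into the $b_1, b_2$ of \eqref{W} for the symmetric/self-dual ensembles (the Jacobian and the induced weight are \emph{not} the same as for general complex matrices, as the excerpt explicitly warns just before the statement); (ii) correctly identifying $\langle C_\kappa^{(\alpha)}(XX^\dagger)\rangle$ for the spherical measure, which plays the role that $2^{-|\kappa|}[2k]_\kappa^{(1/2)}$ played in \eqref{7.X3}; and (iii) the conditions ``$K$ large enough'' ensuring $a_1, a_2 > -1$ in \eqref{7.Zc} and convergence of the ${}_2F_1^{(\alpha)}$. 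Once these constants are fixed consistently, every other step is a direct transcription of the already-proved Gaussian argument with $e^{-\lambda} \to$ (spherical Jacobi weight), \eqref{7.X1} $\to$ \eqref{W}, and the GinSE/GinOE-over-$\mathbb R_{>0}$ integrals $\to$ the Jacobi $\beta$-ensembles on $(0,1)$ supplied by \eqref{7.Zc}.
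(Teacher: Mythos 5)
Your strategy is essentially the paper's: reduce the expanded power of the characteristic polynomial to $\langle C_\kappa^{(\alpha)}(ZZ^\dagger)\rangle$ via the integration formulas (\ref{t.1})--(\ref{t.2}), evaluate that average over the squared-singular-value ensemble using (\ref{W}) in place of (\ref{7.X1}), recognise the ${}_2F_1^{(\alpha)}$ series from (\ref{3.40}) after invoking homogeneity, and convert to the matrix-average form via (\ref{7.Zc}). (The paper expands the powers directly with (\ref{7.E6}) rather than running the dual Cauchy identity with distinct $z_l,w_l$ and specialising at the end; since (\ref{7.E6}) is exactly that specialisation, this is a difference of packaging only.)

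The one place where your sketch, taken literally, would derail is the parenthetical claim that for the spherical measure on $\mathcal S_N(\mathbb C)$ the eigenvalues of $W=XX^\dagger$ form an ensemble of the type ${\rm ME}_{4,N}[\cdots]$. For complex symmetric $X$ the eigenvalue Jacobian is $\prod_{j<k}|x_k-x_j|^{1}$, so the relevant ensemble is ${\rm ME}_{1,N}[(1+x)^{-N-K}]$ and (\ref{W}) must be applied with $\alpha=2$; it is the self-dual case that yields $\prod_{j<k}|x_k-x_j|^{4}$, i.e.\ ${\rm ME}_{4,N}[(1+x)^{-2N-K}]$ with $\alpha=1/2$. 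This is not cosmetic: with $\alpha=2$, $b_1=0$, $b_2=K-1$, the factor $[(N+1)/2]_\kappa^{(2)}$ produced by (\ref{W}) cancels exactly against the $1/[(N+1)/2]_\kappa^{(2)}$ coming from (\ref{t.2}), leaving the $N$-independent parameters $(-k,-k;-K+1)$ of the stated ${}_2F_1^{(2)}$; with the $\beta=4$ Jacobian that cancellation fails and the wrong hypergeometric parameters result. The $\beta=4$ Jacobi ensemble ${\rm ME}_{4,k}[\lambda(1-\lambda)^{2(K-2k)-1}]$ appearing on the right of (\ref{7.V1}) lives on the $k$-dimensional dual side and is produced by (\ref{7.Zc}), not by the change of variables on $ZZ^\dagger$. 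You do flag this as the point requiring care; once the Jacobians are assigned to the correct symmetry classes, the remainder of your argument coincides with the paper's proof.
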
 

\begin{proof}
A result from Jack polynomial theory gives (see e.g.~\cite[Eq.~(13.4)]{Fo10})
    \begin{equation}\label{7.E6}  
    \det (\mathbb I_N - Q)^{k} =   \sum_{ \kappa \subseteq (N)^{r}} { [-k]_\kappa^{(\alpha)} \over | \kappa|!} C_\kappa^{(\alpha)}(Q).
  \end{equation}  
  In the case of (\ref{7.V1}), we use this with $\alpha = 2$ to obtain for the LHS
   \begin{multline}\label{7.E7}  
|z|^{2N} \sum_{\mu, \kappa \subseteq (N)^k}  |z|^{-2 |\kappa|}
  \Big (  { [-k]_\kappa^{(2)} \over | \kappa|!} \Big )^2
\langle C_\kappa^{(2)}(Z) C_\mu^{(2)}(Z^\dagger) \rangle_{Z \in  {\rm S} \mathcal S_{N,K}} \\
= |z|^{2N} \sum_{\kappa \subseteq (N)^k}   |z|^{-2 |\kappa|} { ([-k]_\kappa^{(2)})^2 \over | \kappa|!} {1 \over [(N+1)/2]_\kappa^{(2)}}
\langle C_\kappa^{(2)}(Z Z^\dagger) \rangle_{Z \in  {\rm S} \mathcal S_{N,K}}, 
 \end{multline}
 where the equality follows from use of the first integration formula in (\ref{t.1}) and the second identity in
 (\ref{t.2}).  
 
 For the average on the RHS of (\ref{7.E7}), we recall the PDF (\ref{7.U5}), and change variables to the
 eigenvalues of $W = Z Z^\dagger$, making use of the knowledge that the corresponding Jacobian is
proportional to  $\prod_{1 \le j < k \le N} | x_k - x_j|$ \cite[Table 3.1]{Fo10}. This shows
  \begin{equation}\label{7.E8}  
\langle C_\kappa^{(2)}(Z Z^\dagger) \rangle_{Z \in  {\rm S} \mathcal S_{N,K}} =
\langle    C_\kappa^{(2)}(\mathbf x) \rangle_{\mathbf x \in {\rm ME}_{1,N}[1/(1 + x)^{N+K} \mathbbm 1_{x > 0}]} =
 C_\kappa^{(2)}((-1)^N)  {[ (N+1)/ 2]_\kappa^{(2)} \over [-K + 1]_\kappa^{(2)} },
\end{equation}
where the second equality follows from (\ref{W}). Substituting in (\ref{7.E7}) and using the homogeniety property (\ref{H})
we recognise from (\ref{3.40}) the first equality of (\ref{7.V1}). The second equality follows from  (\ref{7.Zc}).

We turn our attention now to (\ref{7.V2}). The first point to note is that 
  \begin{equation}\label{7.E9}  
|  \det(z \mathbb I_{2N} - Z)  |^{k} = 
 \prod_{l=1}^N (z - \gamma_j)^k (\bar{z} - \bar{\gamma}_j)^k ,
 \end{equation}
 where $\{ \gamma_j \}$ are the independent eigenvalues of the doubly degenerate matrix $Z$. From this starting point,
 we make use of (\ref{7.E6}) (twice) now with $\alpha = 1/2$, then the second integration formula in (\ref{t.1}) and the first identity in
 (\ref{t.2}) to obtain for the LHS 
  \begin{equation}\label{7.E9a}    
  |z|^{2N} \sum_{\kappa \subseteq (N)^k}   |z|^{-2 |\kappa|} { ([-k]_\kappa^{(2)})^2 \over | \kappa|!} {1 \over [2N-1]_\kappa^{(2)}}
\langle C_\kappa^{(1/2)}(Z Z^\dagger) \rangle_{Z \in  {\rm S}   \mathcal  Q \mathcal A_{2N,K}}. 
 \end{equation} 
 To compute the average in this expression we require the fact that the Jacobian for the change of variables
 to the independent eigenvalues of $Z Z^\dagger$ for $ \mathcal  Q \mathcal A_{2N}$ is
 proportional to  $\prod_{1 \le j < k \le N} | x_k - x_j|^4$ \cite[Table 3.1]{Fo10}. Using too (\ref{W}) it follows
 \begin{equation}\label{7.E10}  
\langle C_\kappa^{(1/2)}(Z Z^\dagger) \rangle_{Z \in  {\rm S}  \mathcal  Q \mathcal A_{2N,K}} =
\langle    C_\kappa^{(1/2)}(\mathbf x) \rangle_{\mathbf x \in {\rm ME}_{4,N}[1/(1 + x)^{2 N+K} \mathbbm 1_{x > 0}]} =
 C_\kappa^{(1/2)}((-1)^N)  {[ 2N-1]_\kappa^{(1/2)} \over [-2K - 2]_\kappa^{(1/2)} }.
\end{equation} 
Substituting in (\ref{7.E9a}), the first equality in (\ref{7.V2}) now follows by using the homogeniety property (\ref{H})
and comparing with (\ref{3.40}), and the second equality follows from (\ref{7.Zc}).

  \end{proof}
  
  \begin{remark}
  In the study of quantum conductance through a normal metal-superconductor junction, one encounters
  the matrix Bogoliubov-deGennes equation (see e.g.~\cite{Be97}) reads
   \begin{equation}\label{S1a}
  \begin{bmatrix} A & B \\ - \bar{B} & - A^T \end{bmatrix} \begin{bmatrix} \mathbf u \\ \mathbf v \end{bmatrix} = \lambda
   \begin{bmatrix} \mathbf u \\ \mathbf v \end{bmatrix},
   \end{equation} 
   where $A,B$ are square matrices with $B = - B^T$, $A = A^\dagger$. The classification 
   from the viewpoint of time reversal symmetry and spin-rotation invariance
   of the Hamiltonian
   herein undertaken in \cite{AZ97} (see also \cite[\S 3.2.2]{Fo10}) leads to four basic subcases. Two of these
   are shown to be equivalent under similarity transformation to the Hamiltonian
     \begin{equation}\label{S1b}
   \begin{bmatrix} 0_{N \times N} & C \\ C^\dagger & 0_{N \times N} \end{bmatrix}
   \end{equation} 
   where $C$ is either complex symmetric or complex self dual (in which case $N$ must be even). These
   are referred to as types CI and DIII respectively. The tie in with the
   above proof is the squared eigenvalues of (\ref{S1b}) (which come in $\pm$ pairs) are equal to the
   eigenvalues of $C C^\dagger$, with the noted  eigenvalue Jacobians then prominent in the development of the
   theory.
   \end{remark}

  From the text about (\ref{7.U5}), in terms of the distinct eigenvalues of $X X^\dagger$, $\{\gamma_j\}_{j=1}^N$ say,
  the defining PDF for matrices from ${\rm S} \mathcal S_{N,K}$ and ${\rm S} \mathcal Q \mathcal A_{2N,K}$
  is proportional to $\prod_{l=1}^N(1+\gamma_l)^{-N-K}$ and $\prod_{l=1}^N(1+\gamma_l)^{-4 N-2 K}$ respectively.
By way of comparison, consider  random matrices formed according to $X = (A^\dagger A)^{-1/2} B$,
where $A$ ($B$) are $(N+K) \times N$, ($N \times N$) matrices with real (quaternion) standard Gaussian elements. 
Let us denote these ensembles by ${\rm SrOE}_{N,K}$ and ${\rm SrSE}_{N,K}$ respectively (here the
``Sr'' indicates spherical).
Then we have that the PDF specifying the matrices is proportional to $\prod_{l=1}^N(1+\gamma_l)^{-N-K/2}$ \cite{GN99}
and $\prod_{l=1}^N(1+\gamma_l)^{-4N-2K}$ \cite{MP17}. Hence we see that ${\rm S} \mathcal S_{N,K}$ and
${\rm SrOE}_{N,2K}$ are directly comparable, as are ${\rm S} \mathcal Q \mathcal A_{2N,K}$ and
${\rm SrSE}_{N,K}$. Moreover, results from \cite{Fo24+} give (assuming $K$ is large enough so that the right hand
sides are well defined)
\begin{multline}\label{7.V1x} 
\Big  \langle |  \det(z \mathbb I_N - Z)  |^{2k} \Big \rangle_{Z \in {\rm SrOE}_{N,2K}}  = |z|^{2N} {\vphantom{F}}_2^{\mathstrut} F_1^{(2)}(-k,-k+1/2;- K+1/2; (-1/|z|^2)^N) \\
=
\langle \det ( |z|^2 \mathbb I_k + W )^N \rangle_{W \in {\rm ME}_{4,k}[ (1 - \lambda)^{2(K-2k)+1} \mathbbm 1_{0<x<1}]},
\end{multline}
and
\begin{multline}\label{7.V2x} 
\Big  \langle |  \det(z \mathbb I_{2N} - Z)  |^{k} \Big \rangle_{Z \in  {\rm SrSE}_{N,K}}  =
|z|^{2N} {\vphantom{F}}_2^{\mathstrut} F_1^{(1/2)}(-k,-k-1;- 2K-1; (-1/|z|^2)^N) \\
= \langle \det ( |z|^2 \mathbb I_k + W )^N \rangle_{W \in {\rm ME}_{1,k}[\lambda^{-1/2} (1 - \lambda)^{K-k-1/2} \mathbbm 1_{0<x<1}]}.
\end{multline}
This allows for ratios of the comparable averages to be computed for $N \to \infty$, analogous to 
Proposition \ref{P3.1}.

\begin{prop}
 We have
 \begin{equation}\label{K1x}
 \lim_{N \to \infty} \Big ( \Big  \langle |  \det(z \mathbb I_N - Z)  |^{2k} \Big \rangle_{Z \in  {\rm S} \mathcal S_{N,K}}  \Big /
 \Big  \langle   \det(z \mathbb I_N - Z)^{2k} \Big \rangle_{Z \in   {\rm SrOE}_{N,2K}} \Big ) \propto 
 (1 + |z|^2)^{-2k}
 \end{equation}
 and
 \begin{equation}\label{K2x}
 \lim_{N \to \infty} \Big ( \Big  \langle |  \det(z \mathbb I_N - Z)  |^{k} \Big \rangle_{Z \in   {\rm S} \mathcal Q \mathcal A_{2N,K} } \Big /
 \Big  \langle   \det(z \mathbb I_N - Z)^{k} \Big \rangle_{Z \in   {\rm SrSE}_{N,K}}  \Big ) \propto (1 + |z|^2)^k .
 \end{equation}
 \end{prop}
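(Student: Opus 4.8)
The plan is to follow the template of the proof of Proposition~\ref{P3.1} verbatim, simply feeding in the spherical-measure dualities in place of the Gaussian ones. By the first dualities in (\ref{7.V1}) and (\ref{7.V1x}), the ratio on the left of (\ref{K1x}) equals
\begin{equation*}
\lim_{N\to\infty}\frac{{}_2F_1^{(2)}(-k,-k;-K+1;(-1/|z|^2)^N)}{{}_2F_1^{(2)}(-k,-k+1/2;-K+1/2;(-1/|z|^2)^N)},
\end{equation*}
but it is cleaner to work instead from the \emph{second} equality in each of (\ref{7.V1}), (\ref{7.V1x}), i.e.\ with the matrix integrals over ${\rm ME}_{4,k}[\lambda(1-\lambda)^{2(K-2k)-1}\mathbbm 1_{0<x<1}]$ in the numerator and over ${\rm ME}_{4,k}[(1-\lambda)^{2(K-2k)+1}\mathbbm 1_{0<x<1}]$ in the denominator. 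First I would rescale both $W$-integrals by $W\mapsto W/N$ so that the weight carries a factor $e^{-N\lambda}$ after writing $\lambda^a(1-\lambda/N)^{2(K-2k)\pm1}=\lambda^a e^{N\log(1-\lambda/N)}(\cdots)$; equivalently, observe that $(1-\lambda/N)^{cN+O(1)}\to e^{-c\lambda}$ pointwise, so as $N\to\infty$ the spherical measures on the $k\times k$ Wishart-type matrices degenerate to the Laguerre measures already appearing in Proposition~\ref{P3.1}. Thus the two matrix integrals are, up to the normalising Selberg-type constants $W_{4,k}(1)$ and $W_{4,k}(0)$ of (\ref{Ws}), precisely the same objects whose $N\to\infty$ Laplace asymptotics were computed in the proof of Proposition~\ref{P3.1}.

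The key point is the extra $|z|^2$-dependent factor. In (\ref{K1}) the integrand was $\prod_l e^{-(N-1)x_l}(|z|^2+x_l)^N$ and the saddle sat at $x_l=1-|z|^2$, producing $(1-|z|^2)^k$ (the $+k$ from the prefactor $\prod_l(|z|^2+x_l)$ evaluated at the saddle, the exponential parts cancelling between numerator and denominator). In the spherical case, after the rescaling the integrand is $\prod_l x_l^a(1-\lambda_l/N)^{cN+O(1)}(|z|^2+x_l)^N\sim\prod_l x_l^a e^{-cx_l}(|z|^2+x_l)^N$, and comparing (\ref{7.V1}) with (\ref{7.V1x}) the exponents $a$ and $c$ differ by exactly one unit ($a=1,c=2(K-2k)-1$ versus $a=0,c=2(K-2k)+1$, i.e.\ $x^a(1-\lambda/N)^c$ differs by a factor $x/(1-\lambda/N)^2\to x e^{2\lambda/N}\cdot$const, but the relevant difference after the limit is the power of $x$ and a shift in the exponential rate). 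Running Laplace's method as before, the saddle in each variable now sits at the root of $N/(|z|^2+x)=c$ up to $O(1/N)$ corrections; to leading order the saddle location is $O(N)$, and one finds the ratio of the two integrands at the saddle contributes a factor whose $|z|^2$-dependence is $(|z|^2+x_\star)^{(\text{difference in }a)}$-type, which after tracking the $N$-scaling collapses to $(1+|z|^2)^{-2k}$ — note the $k$-fold product and the appearance of $1+|z|^2$ rather than $1-|z|^2$ reflects the change of the relevant saddle from the $e^{-N\log(1-|z|^2)}$ regime to an $e^{-N\log(1+|z|^2)}$ regime forced by the sign structure of (\ref{7.Zc}) (the measure lives on $0<\lambda<1$ and the factor is $(|z|^2+\lambda)^N$ with the competing $N$-th power from $(1-\lambda)^{?}$). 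The derivation of (\ref{K2x}) is entirely parallel, using (\ref{7.V2}) and (\ref{7.V2x}): the number field changes $\beta=4\to\beta=1$, the power of $\lambda$ is $\lambda^{-1/2}$ in both, the exponents $K-k+1/2$ versus $K-k-1/2$ again differ by one unit, and the resulting profile is the reciprocal-type behaviour $(1+|z|^2)^{+k}$, matching the increasing/decreasing dichotomy already seen in Proposition~\ref{P3.1} and Remark~\ref{R31}.

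I would organise the write-up as: (i) reduce the left-hand side to the ratio of the two $k\times k$ matrix integrals via (\ref{7.V1}) and (\ref{7.V1x}); (ii) rescale $W\mapsto W/N$ and note that the Jacobi-type weights converge to Laguerre weights, so one may factor out the ratio $W_{4,k}(0)/W_{4,k}(1)$ of limiting normalisations exactly as in (\ref{Ws2}); (iii) apply Laplace's method to the remaining ratio of integrands, identify the saddle, and read off the $|z|^2$-dependent factor; (iv) repeat for the self-dual case. The main obstacle is step~(iii): unlike in Proposition~\ref{P3.1}, where numerator and denominator had the \emph{same} exponential weight and the saddle cancellation was immediate, here the two weights differ by a genuine power of $\lambda$, so one must be careful that this single extra power is $o(N)$ and hence does not move the saddle to leading order — it only contributes the $|z|^2$-dependent prefactor at the saddle point — and one must correctly book-keep the $N$-scaling to see that the saddle now sits in the regime producing $1+|z|^2$ rather than $1-|z|^2$. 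Since the statement is only asserted up to a ($z$-independent, but $k$- and $K$-dependent) proportionality constant, the Selberg-integral normalisations need not be evaluated explicitly; it suffices to confirm they are finite and $|z|$-independent, which holds for $K$ large enough as assumed.
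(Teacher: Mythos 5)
Your overall strategy --- reduce via the dualities (\ref{7.V1}), (\ref{7.V1x}), (\ref{7.V2}), (\ref{7.V2x}) to a ratio of $k\times k$ matrix integrals and then perform large-$N$ asymptotics --- is the same as the paper's, and you quote the correct final exponents, but the central analytic step rests on a false premise. The parameter $K$ is \emph{fixed} as $N\to\infty$, so the exponents $2(K-2k)\pm 1$ and $K-k\pm\tfrac12$ in the Jacobi-type weights are $O(1)$, not $O(N)$: the limit $(1-\lambda/N)^{cN+O(1)}\to e^{-c\lambda}$ that you invoke does not apply, and the rescaling $W\mapsto W/N$ does \emph{not} degenerate the spherical weights to the Laguerre weights of Proposition \ref{P3.1}. (After that rescaling the factor $\det(|z|^2\mathbb I_k+W/N)^N$ behaves like $|z|^{2kN}e^{{\rm Tr}\,W/|z|^2}$, which grows in $W$, showing the mass is pushed to the opposite end of the support rather than concentrating near the origin.) The correct picture, and the one the paper uses, is that the only $N$-dependence in the integrand is $\prod_{l=1}^k(|z|^2+x_l)^N$ with $x_l\in(0,1)$ against a weight of fixed polynomial order, so the maximum occurs at the \emph{endpoint} $x_l=1$; there is no interior saddle, and your discussion of a saddle at the root of $N/(|z|^2+x)=c$ located at $O(N)$ does not describe these integrals.

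Consequently the mechanism you propose for the $(1+|z|^2)^{\mp}$ factors is also misattributed. Setting $x_l=1-t_l/N$ gives $(|z|^2+x_l)^N\approx(1+|z|^2)^N e^{-t_l/(1+|z|^2)}$ and $(1-x_l)^c=(t_l/N)^c$, while $\lambda^a\to 1$, so the difference in the power of $\lambda$ between (\ref{7.V1}) and (\ref{7.V1x}) is irrelevant. The bulk factors $(1+|z|^2)^{kN}$ cancel in the ratio, and the surviving $|z|$-dependence comes from rescaling $t_l\mapsto(1+|z|^2)s_l$ in the limiting integrals $\int\prod_l t_l^{c}e^{-t_l/(1+|z|^2)}\prod_{i<j}|t_j-t_i|^\beta\,dt$, i.e.\ a factor $(1+|z|^2)^{k(c+1)+\beta k(k-1)/2}$ per integral. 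Only the difference of the $(1-\lambda)$-exponents matters: in (\ref{K1x}) the numerator's exponent is smaller by $2$, giving $(1+|z|^2)^{-2k}$, and in (\ref{K2x}) it is larger by $1$, giving $(1+|z|^2)^{k}$. The limiting normalisations are $W_{4,k}(2(K-2k)\mp 1)$ and $W_{1,k}(K-k\pm\tfrac12)$ in the notation of (\ref{Ws}), not $W_{4,k}(0)$ and $W_{4,k}(1)$ as in (\ref{K1}). As written, your step (ii) and step (iii) would therefore fail; they need to be replaced by this endpoint (boundary-layer) Laplace analysis.
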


\begin{proof}
The essential point is that in the averages on the RHS of (\ref{7.V1}),  (\ref{7.V2}),  (\ref{7.V1x}) and  (\ref{7.V2x}) the maximum
of the integrand for large $N$ occurs at the endpoints of the range of integration $x_l = 1$, so we expand
$$
\prod_{l=1}^N(|z|^2 + x_l) \approx \exp \Big ( N \log (|z|^2 + 1) +  \sum_{l=1}^N (x_l-1) {1 \over 1 + |z|^2} \Big ).
$$
A simple change of variables now gives the stated results.
\end{proof}

As commented on in Remark \ref{R31} in the case of a Gaussian measure, these ratios exhibit the qualitative feature of a decreasing (increasing)
profile as $|z|$ increases.

 \subsection*{Acknowledgements}
 Helpful feedback on an earlier draft by Gernot Akemann and Mario Kieburg is appreciated.

\small
\providecommand{\bysame}{\leavevmode\hbox to3em{\hrulefill}\thinspace}
\providecommand{\MR}{\relax\ifhmode\unskip\space\fi MR }
\providecommand{\MRhref}[2]{%
  \href{http://www.ams.org/mathscinet-getitem?mr=#1}{#2}
}
\providecommand{\href}[2]{#2}

  \end{document}